\title{Extending the Reach of the Point-to-Set Principle}
\author{Jack H. Lutz\footnote{Research supported in part by National Science Foundation grants 1545028 and 1900716 and partly done during visits at the Institute for Mathematical Sciences at the National University of Singapore and the California Institute of Technology.}\\Iowa State University \and Neil Lutz\footnote{Part of this work was done while this author was at the University of Pennsylvania.}\\Iowa State University \and Elvira Mayordomo\footnote{{Research supported in part by Spanish Ministry of Science, Innovation and Universities grants TIN2016-80347-R and PID2019-104358RB-I00 and partly done during a visit to the Institute for Mathematical Sciences at the National University of Singapore.}}\\Universidad de Zaragoza}
\date{}
\theoremstyle{plain}
\newtheorem{theorem}{Theorem}
\numberwithin{theorem}{section}
\newtheorem{lemma}[theorem]{Lemma}
\newtheorem{corollary}[theorem]{Corollary}
\newtheorem{observation}[theorem]{Observation}
\theoremstyle{definition}
\newtheorem{construction}[theorem]{Construction}
\newtheorem*{defn}{Definition}
\numberwithin{equation}{section}
\newcommand{\dimm}{\underline{\dim}_{\mathcal{M}}}
\newcommand{\Dimm}{\overline{\dim}_{\mathcal{M}}}
\newcommand{\cK}{\mathcal{K}}
\newcommand{\dimpack}{{\dim}_{\mathrm{P}}}
\newcommand{\dimh}{\mathrm{dim}_\mathrm{H}}
\newcommand{\Dim}{{\mathrm{Dim}}}
\newcommand{\diam}{\mathrm{diam}}
\newcommand{\R}{\mathbb{R}}
\newcommand{\N}{\mathbb{N}}
\newcommand{\Z}{\mathbb{Z}}
\newcommand{\Q}{\mathbb{Q}}
\newcommand{\Rn}{\R^n}
\newcommand{\myset}[2]{ \left\{ #1 \left| \, #2 \right. \right\} }
\newcommand{\range}{\mathrm{range}}
\newcommand{\binary}{\{0,1\}^*}
\newcommand{\C}{{\mathrm{C}}}
\newcommand{\rhoH}{\rho_\mathrm{H}}
\newcommand{\cD}{\mathcal{D}}
\newcommand{\bin}{\{0,1\}}
\begin{document}
	\maketitle

\begin{abstract}
The \emph{point-to-set principle} of J. Lutz and N. Lutz (2018) has recently enabled the theory of computing to be used to answer open questions about fractal geometry in Euclidean spaces $\R^n$.  These are classical questions, meaning that their statements do not involve computation or related aspects of logic.

In this paper we extend the \emph{reach} of the point-to-set principle from Euclidean spaces to arbitrary separable metric spaces $X$.  We first extend two fractal dimensions--—computability-theoretic versions of classical Hausdorff and packing dimensions that assign dimensions $\dim(x)$ and $\Dim(x)$ to \emph{individual points} $x\in X$---to arbitrary separable metric spaces and to arbitrary gauge families.  Our first two main results then extend the point-to-set principle to arbitrary separable metric spaces and to a large class of gauge families. 

We demonstrate the power of our extended point-to-set principle by using it to prove new theorems about classical fractal dimensions in hyperspaces.  (For a concrete computational example, the stages $E_0, E_1, E_2, \ldots$ used to construct a self-similar fractal $E$ in the plane are elements of the hyperspace of the plane, and they converge to $E$ in the hyperspace.) Our third main result, proven via our extended point-to-set principle, states that, under a wide variety of gauge families, the classical packing dimension agrees with the classical upper Minkowski dimension on \emph{all} hyperspaces of compact sets.  We use this theorem to give, for all sets $E$ that are analytic, i.e., $\mathbf{\Sigma}^1_1$, a tight bound on the packing dimension of the hyperspace of $E$ in terms of the packing dimension of $E$ itself.
\end{abstract}

\section{Introduction}\label{sec:intro}

It is rare for the theory of computing to be used to answer open mathematical questions---especially questions in continuous mathematics---whose statements do not involve computation or related aspects of logic.\footnote{We use the adjective ``classical'' for theorems and questions whose statements do not involve computability or logic, regardless of when they were proven or formulated. A ``classical'' theorem can thus be very new.} The {\it point-to-set principle}~\cite{LutLut18}, described below, has enabled several recent developments that do exactly this. This principle has been used to obtain strengthened lower bounds on the Hausdorff dimensions of generalized Furstenberg sets~\cite{LutStu20}, extend the fractal intersection formula for Hausdorff dimension from Borel sets to arbitrary sets~\cite{Lutz21}, and prove that Marstrand's projection theorem for Hausdorff dimension holds for any set $E$ whose Hausdorff and packing dimensions coincide, whether or not $E$ is analytic~\cite{LutStu18}.\footnote{These very non-classical proofs of new classical theorems have provoked new work in the fractal geometry community. Orponen~\cite{Orpo20} has very recently used a discretized potential-theoretic method of Kaufman~\cite{Kauf68} and tools of Katz and Tao~\cite{KatTao01} to give a new, classical proof of the two main theorems of~\cite{LutStu18}.} (See~\cite{DowHir19a,Dowhir19b,LutLut20,LutMay21} for reviews of these developments.) These applications of the point-to-set principle all concern fractal geometry in Euclidean spaces $\R^n$.\footnote{Applications of the theory of computing---specifically Kolmogorov complexity---to discrete mathematics are more numerous and are surveyed in~\cite{LiVit19}. Other applications to continuous mathematics, not involving the point-to-set principle, include theorems in descriptive set theory~\cite{oMosc80,HjKeAl97,KeSoTo99}, Riemannian moduli space~\cite{Weinbe04}, and Banach spaces~\cite{KihPau14}.}

This paper extends the reach of the point-to-set principle beyond Euclidean spaces. To explain this, we first review the point-to-set principle to date. (All quantities defined in this intuitive discussion are defined precisely later in the paper.) The two best-behaved classical fractal dimensions, Hausdorff dimension and packing dimension, assign to every subset $E$ of a Euclidean space $\R^n$ dimensions $\dimh(E)$ and $\dimpack(E)$, respectively. When $E$ is a ``smooth'' set that intuitively has some integral dimension between 0 and $n$, the Hausdorff and packing dimensions agree with this intuition, but more complex sets $E$ may have any real-valued dimensions satisfying $0\leq \dimh(E) \leq \dimpack(E)\leq n$. Hausdorff and packing dimensions have many applications in information theory, dynamical systems, and other areas of science \cite{Bill65,Falc14,KatHas95,Pesin98}.

Early in this century, algorithmic versions of Hausdorff and packing dimensions were developed to quantify the information densities of various types of data. The computational resources allotted to these algorithmic dimensions range from finite-state to computable enumerability and beyond, but the point-to-set principle concerns the computably enumerable algorithmic dimensions introduced in~\cite{DISS,ESDAICC}.\footnote{These have also been called ``constructive'' dimensions and ``effective'' dimensions by various authors.} These assign to each {\it individual point} $x$ in a Euclidean space $\R^n$ an {\it algorithmic dimension} $\dim(x)$ and a {\it strong algorithmic dimension} $\Dim(x)$. The point-to-set principle of~\cite{LutLut18} is a complete characterization of the classical Hausdorff and packing dimensions in terms of oracle relativizations of these very non-classical dimensions of individual points. Specifically, the point-to-set principle says that, for every set $E$ in a Euclidean space $\R^n$,
\begin{equation}\label{eq:p2sh}
	\dimh(E)=\adjustlimits\min_{A\subseteq\N}\sup_{x\in E}\,\dim^A(x)
\end{equation}
and
\begin{equation}\label{eq:p2sp}
	\dimpack(E)=\adjustlimits\min_{A\subseteq\N}\sup_{x\in E}\,\Dim^A(x),
\end{equation}
where the the dimensions on the right are relative to the oracle $A$. The point-to-set principle is so named because it enables one to use a lower bound on the relativized algorithmic dimension of a single, judiciously chosen \emph{point} in a set $E$ to prove a lower bound on the classical dimension of the \emph{set} $E$.

The classical Hausdorff and packing dimensions work not only in Euclidean spaces, but in arbitrary metric spaces. In contrast, nearly all work on algorithmic dimensions to date (the exception being~\cite{edgms}) has been in Euclidean spaces or in spaces of infinite sequences over finite alphabets. Our objective here is to significantly reduce this gap by extending the theory of algorithmic dimensions, along with the point-to-set principle, to arbitrary separable metric spaces. (A metric space $X$ is \emph{separable} if it has a countable subset $D$ that is {\it dense} in the sense that every point in $X$ has points in $D$ arbitrarily close to it.) 

In parallel with extending algorithmic dimensions to separable metric spaces, we also extend them to arbitrary gauge families. It was already explicit in Hausdorff's original paper~\cite{Haus19} that his dimension could be defined via various ``lenses'' that we now call {\it gauge functions}. In fact, one often uses, as we do here, a {\it gauge family} $\varphi$, which is a one-parameter family of gauge functions $\varphi_s$ for $s \in (0,\infty)$. For each separable metric space $X$, each gauge family $\varphi$, and each set $E \subseteq X$, the classical \emph{$\varphi$-gauged Hausdorff dimension} $\dimh^\varphi(E)$ and \emph{$\varphi$-gauged packing dimension} $\dimpack^\varphi(E)$ are thus well-defined. In this paper, for each separable metric space $X$, each gauge family $\varphi$, and each point $x \in X$, we define the \emph{$\varphi$-gauged algorithmic dimension} $\dim^\varphi(x)$ and the \emph{$\varphi$-gauged strong algorithmic dimension} $\Dim^\varphi(x)$ of the point $x$. We should mention here that there is a particular gauge family $\theta$ that gives the ``un-gauged'' dimensions in the sense that the identities $\dimh^\theta(E) = \dimh(E)$, $\dimpack^\theta(E) = \dimpack(E)$, $\dim^\theta(x)= \dim(x)$, and $\Dim^\theta(x) = \Dim(x)$
always hold.

Our first two main results (Theorems~\ref{theo31} and~\ref{the32}) extend the point-to-set principle to arbitrary separable metric spaces and a wide variety of gauge families, proving that, for every separable metric space $X$, every gauge family $\varphi$ satisfying mild asymptotic constraints, and every set $E \subseteq X$,
\begin{equation}\label{eq:gp2sh}
	\dimh^{\varphi}(E)=\adjustlimits\min_{A\subseteq\N}\sup_{x\in E}\dim^{{\varphi},A}(x)
\end{equation}
and
\begin{equation}\label{eq:gp2sp}
	\dimpack^{\varphi}(E)=\adjustlimits\min_{A\subseteq\N}\sup_{x\in E}\Dim^{{\varphi},A}(x).
\end{equation}
Various nontrivial modifications to both machinery and proofs are necessary in getting from~\eqref{eq:p2sh} and~\eqref{eq:p2sp} to~\eqref{eq:gp2sh} and~\eqref{eq:gp2sp}. 

As an illustration of the power of our approach, we investigate the dimensions of hyperspaces. The \emph{hyperspace} $\cK(X)$ of a metric space $X$ is the set of all nonempty compact subsets of $X$, equipped with the Hausdorff metric~\cite{oWill04}. (For example, the ``stages'' $E_0,E_1,E_2,\ldots$ of a self-similar fractal $E\subseteq\R^n$ converge to $E$ in the hyperspace $\R^n$.) The hyperspace of a separable metric space is itself a separable metric space, and the hyperspace is typically infinite-dimensional, even when the underlying metric space is finite-dimensional. One use of gauge families is reducing such infinite dimensions to enable quantitative comparisons. For example, McClure~\cite{McClur95} defined, for each gauge family $\varphi$, a \emph{jump} $\widetilde{\varphi}$ (our notation) that is also a gauge family, and he proved~\cite{McC96} for every \emph{self-similar} subset $E$ of a separable metric space $X$,
\[\dimh^{\widetilde{\theta}}(\cK(E))=\dimh(E),\]
where $\theta$ is the above-mentioned ``un-gauged'' gauge family.

Here we prove a \emph{hyperspace dimension theorem} for the upper and lower Minkowski (i.e., box-counting) dimensions $\dimm$ and $\Dimm$. This states that, for every separable metric space $X$, \emph{every} gauge family $\varphi$, and \emph{every} $E\subseteq X$,
\begin{equation}\label{eq:hdtintro1}
	\dimm^{\widetilde{\varphi}}(\cK(E))=\dimm^\varphi(E)
\end{equation}
and 
\begin{equation} \label{eq:hdtintro2}
	\Dimm^{\widetilde{\varphi}}(\cK(E))=\Dimm^\varphi(E).
\end{equation}
We note that it is implicit in~\cite{McClur95} that these identities hold for totally bounded sets $E$ and gauge families $\varphi$ satisfying a doubling condition.

Our third main result (Theorem~\ref{thm:packminkequiv}) says that, for every separable metric space $X$, every ``well-behaved'' gauge family $\varphi$, and every compact set $E\subseteq X$,
\begin{equation}\label{eq:hdtintro3}
	\dimpack^{\widetilde{\varphi}}(\cK(E))=\Dimm^{\widetilde{\varphi}}(\cK(E)).
\end{equation}
Our proof of this result makes essential use of~\eqref{eq:hdtintro2} and the point-to-set principle~\eqref{eq:gp2sp}.

Finally, we use the point-to-set principle~\eqref{eq:gp2sp}, the identities~\eqref{eq:hdtintro2} and~\eqref{eq:hdtintro3}, and some additional machinery to prove the \emph{hyperspace packing dimension theorem} (Theorem~\ref{thm:hpdt}), which says that, for every separable metric space $X$, every well-behaved gauge family $\varphi$, and every \emph{analytic} (i.e., $\mathbf{\Sigma}^1_1$, an analog of NP that Sipser famously investigated~\cite{Sips83,Sips84,Sips92}) set $E\subseteq X$,
\begin{equation}\label{eq:hdtintro4}
	\dimpack^{\widetilde{\varphi}}(\cK(E))\geq \dimpack^\varphi(E).
\end{equation}
It is implicit in~\cite{McClur95} that~\eqref{eq:hdtintro4} holds for all $\sigma$-compact sets $E$. 

At the time of this writing it is an open question whether there is an analogous hyperspace dimension theorem for Hausdorff dimension.

David Hilbert famously wrote the following~\cite{Hilb25}.
\begin{quote}
	The final test of every new theory is its success in answering preexistent questions that the theory was not specifically created to answer.
\end{quote}
The theory of algorithmic dimensions passed Hilbert's final test when the point-to-set principle gave us the the results in the first paragraph of this introduction. We hope that the machinery developed here will lead to further such successes in the wider arena of separable metric spaces.

\section{Gauged Classical Dimensions}\label{sec:prelim}
We review the definitions of gauged Hausdorff, packing, and Minkowski dimensions. We refer the reader to~\cite{Falc14,Matt95} for a complete introduction and motivation.

	Let $(X,\rho)$ be a metric space where $\rho$ is the metric. (From now on we will omit $\rho$ when referring to the space $(X,\rho)$.) $X$ is \emph{separable} if there exists a countable set $D\subseteq X$ that is \emph{dense} in $X$, meaning that for every $x\in X$ and $\delta>0$, there is a $d\in D$ such that $\rho(x,d)<\delta$. The \emph{diameter} of a set $E\subseteq X$ is $\diam(E)=\sup\myset{\rho(x,y)}{x,y\in E}$; notice that the diameter of a set can be infinite. A \emph{cover} of $E\subseteq X$ is a collection $\mathcal{U}\subseteq\mathcal{P}(X)$ such that $E\subseteq\bigcup_{U\in\mathcal{U}}U$, and a \emph{$\delta$-cover} of $E$ is a cover $\mathcal{U}$ of $E$ such that $\diam(U)\leq\delta$ for all $U\in\mathcal{U}$.
	
	\begin{defn}[gauge functions and families]
		A \emph{gauge function} is a continuous,\footnote{Some authors require only that the function is right-continuous when working with Hausdorff dimension and left-continuous when working with packing dimension. Indeed, left continuity is sufficient for our hyperspace packing dimension theorem.} nondecreasing function from $[0, \infty)$ to $[0, \infty)$ that vanishes only at 0~\cite{Haus19, Roge98}. A \emph{gauge family} is a one-parameter family $\varphi=\myset{\varphi_s}{s\in(0, \infty)}$ of gauge functions $\varphi_s$ satisfying \[\varphi_{s}(\delta)=o(\varphi_{t}(\delta)) \mbox{ as }\delta\to 0^{+}\] whenever $s>t$.
	\end{defn}

	The \emph{canonical gauge family} is $\theta=\myset{\theta_s}{s\in(0,\infty)}$, defined by $\theta_s(\delta)=\delta^s$. ``Un-gauged'' or ``ordinary'' Hausdorff, packing, and Minkowski dimensions are special cases of the following definitions, using $\varphi=\theta$.

	Some of our gauged dimension results will require the existence of a ``precision family'' for the gauge family.
	\begin{defn}[precision family]
		A \emph{precision sequence} for a gauge function $\varphi$ is a function $\alpha:\N\to\Q^+$ that vanishes as $r\to\infty$ and satisfies $\varphi(\alpha(r))=O(\varphi(\alpha(r+1)))$ as $r\to\infty$. A \emph{precision family} for a gauge family $\varphi=\{\varphi_s\mid s\in(0,\infty)\}$ is a one-parameter family $\alpha=\{\alpha_s\mid s\in(0,\infty)\}$ of precision sequences satisfying
		\[\sum_{r\in\N}\frac{\varphi_t(\alpha_s(r))}{\varphi_s(\alpha_s(r))}<\infty\]
		whenever $s<t$.
	\end{defn}
	\begin{observation}
		$\alpha_s(r)=2^{-sr}$ is a precision family for the canonical gauge family $\theta$.
	\end{observation}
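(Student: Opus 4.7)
The plan is to verify directly the three requirements of the precision family definition applied to the canonical gauge family $\theta_s(\delta)=\delta^s$ with $\alpha_s(r)=2^{-sr}$. Two of the three requirements concern each individual $\alpha_s$ qua precision sequence for $\theta_s$, and the third is the cross summability condition that binds the one-parameter family together. All three reduce to direct elementary computations with exponentials.

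First, for each fixed $s\in(0,\infty)$, I would check that $\alpha_s$ is a precision sequence for $\theta_s$. Vanishing is immediate: since $s>0$, $\alpha_s(r)=2^{-sr}\to 0$ as $r\to\infty$. For the $O$-relation, one computes
\[
\frac{\theta_s(\alpha_s(r))}{\theta_s(\alpha_s(r+1))}=\frac{(2^{-sr})^s}{(2^{-s(r+1)})^s}=2^{s^2},
\]
so $\theta_s(\alpha_s(r))=O(\theta_s(\alpha_s(r+1)))$ with implicit constant $2^{s^2}$, independent of $r$.

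For the cross condition, fix $s<t$ in $(0,\infty)$. Then
\[
\frac{\theta_t(\alpha_s(r))}{\theta_s(\alpha_s(r))}=\frac{(2^{-sr})^t}{(2^{-sr})^s}=2^{-sr(t-s)},
\]
and because $s(t-s)>0$ this is a geometric sequence in $r$ with common ratio $2^{-s(t-s)}<1$, so the series $\sum_{r\in\N} 2^{-sr(t-s)}$ converges. This establishes the summability requirement.

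The one possible friction point is the codomain $\Q^+$ in the definition of a precision sequence: $2^{-sr}$ is typically irrational. I would address this by replacing $2^{-sr}$ with any rational in the interval $[2^{-sr},2\cdot 2^{-sr})$ (for instance $2^{-\lfloor sr\rfloor-1}$), which preserves vanishing, only multiplies the $O$-constant in the second condition by a bounded factor, and at worst multiplies each summand in the third condition by a bounded factor, so all three conditions still hold. No genuine obstacle arises, and the observation follows.
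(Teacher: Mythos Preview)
Your verification is correct. The paper states this observation without proof, so there is nothing to compare your argument against; your direct check of the three defining conditions is exactly what is needed, and the computations are right.

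Your attention to the codomain $\Q^+$ is a genuine point that the paper glosses over: for irrational $s$ the value $2^{-sr}$ is not rational, so strictly the formula as written does not land in $\Q^+$. Your proposed fix---replacing $2^{-sr}$ by a nearby rational within a fixed multiplicative factor---is the right idea and clearly preserves all three conditions. One tiny slip: the specific instance $2^{-\lfloor sr\rfloor-1}$ actually lies in $[2^{-sr}/2,\,2^{-sr}]$, not in $[2^{-sr},\,2\cdot 2^{-sr})$; the choice $2^{-\lfloor sr\rfloor}$ would match your stated interval. Either interval works for the argument, so this is cosmetic.
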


	\begin{defn}[gauged Hausdorff measure and dimension]
		For every metric space $X$, set $E\subseteq X$, and gauge function $\varphi$, the \emph{$\varphi$-gauged Hausdorff measure} of $E$ is
		\[H^\varphi(E)=\lim_{\delta\to 0^+}\inf\left\{\sum_{U\in\mathcal{U}}\varphi(\diam(U))\;\middle|\;\mathcal{U}\text{ is a countable $\delta$-cover of }E\right\}.\]
		 For every gauge family $\varphi=\myset{\varphi_s}{s\in(0, \infty)}$, the \emph{$\varphi$-gauged Hausdorff dimension} of $E$ is
		\[\dimh^\varphi(E)=\inf\myset{s\in(0,\infty)}{H^{\varphi_s}(E)=0}.\]
	\end{defn}

	\begin{defn}[gauged packing measure and dimension]
		For every metric space $X$, set $E\subseteq X$, and $\delta\in(0,\infty)$, let $\mathcal{V}_\delta(E)$ be the set of all countable collections of disjoint open balls with centers in $E$ and diameters at most $\delta$. For every gauge function $\varphi$ and $\delta>0$, define the quantity
		\[P^\varphi_\delta(E)=\sup_{\mathcal{U}\in\mathcal{V}_\delta(E)}\sum_{U\in\mathcal{U}}\varphi(\diam(U)).\]
		Then the \emph{$\varphi$-gauged packing pre-measure} of $E$ is
		\[P^\varphi_0(E)=\lim_{\delta\to 0^+}P^\varphi_\delta(E),\]
		and the \emph{$\varphi$-gauged packing measure} of $E$ is
		\[P^\varphi(E)=\inf\left\{\sum_{U\in\mathcal{U}} P^\varphi_0(U)\;\middle|\;\mathcal{U}\text{ is a countable cover of }E\right\}.\]
		For every gauge family $\varphi=\myset{\varphi_s}{s\in(0, \infty)}$, the \emph{$\varphi$-gauged packing dimension} of $E$ is
		\[\dimpack^\varphi(E)=\inf\myset{s\in(0,\infty)}{P^{\varphi_s}(E)=0}.\]
	\end{defn}

	\begin{defn}[gauged Minkowski dimensions]
		For every metric space $X$, $E\subseteq X$, and $\delta\in(0,\infty)$, let
		\[N(E,\delta)=\min\myset{|F|}{F\subseteq X\text{ and }E\subseteq \bigcup_{x\in F}B_\delta(x)},\]
		where $B_\delta(x)$ is the open ball of radius $\delta$ centered at $x$. Then for every gauge family $\varphi=\{\varphi_s\}_{s\in(0, \infty)}$ the \emph{$\varphi$-gauged lower} and \emph{upper Minkowski dimension} of $E$ are
		\[\dimm^\varphi(E)=\inf\left\{s\;\middle|\;\liminf_{\delta\to 0^+} N(E,\delta)\varphi_s(\delta)=0\right\}\]
		and
		\[\Dimm^\varphi(E)=\inf\left\{s\;\middle|\;\limsup_{\delta\to 0^+} N(E,\delta)\varphi_s(\delta)=0\right\},\]
		respectively.
	\end{defn}

	When $X$ is separable, it is sometimes useful to require that the balls covering $E$ have centers in the countable dense set $D$. For all $E\subseteq X$ and $\delta\in(0,\infty)$, let \[\hat{N}(E,\delta)=\min\myset{|F|}{F\subseteq D\text{ and }E\subseteq \bigcup_{x\in F}B_\delta(x)}.\]
	\begin{observation}\label{obs:separabledim}
		If $X$ is a separable metric space and $\varphi=\{\varphi_s\}_{s\in(0, \infty)}$ is a gauge family, then for all $E\subseteq X$,
		\begin{enumerate}

		\vspace{0.5em}
		\item $\displaystyle\dimm^\varphi(E)=\inf\left\{s\;\middle|\;\liminf_{\delta\to 0^+} \hat{N}(E,\delta)\varphi_s(\delta)=0\right\}$.

		\vspace{0.5em}
		\item $\displaystyle\Dimm^\varphi(E)=\inf\left\{s\;\middle|\;\limsup_{\delta\to 0^+} \hat{N}(E,\delta)\varphi_s(\delta)=0\right\}$.
		\end{enumerate}
	\end{observation}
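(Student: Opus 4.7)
The inequality $N(E,\delta)\leq \hat N(E,\delta)$ is immediate, since $\hat N$ is the same minimum restricted to covers with centers in the dense set $D$; this handles one inequality in each item. For the converse I plan to use the density of $D$ to turn an arbitrary optimal $\delta$-cover of $E$ into a cover by balls centered in $D$, at the cost of enlarging the radius slightly.

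Concretely, given an optimal cover of $E$ by balls $B_\delta(x_1),\ldots,B_\delta(x_n)$ and any $\epsilon>0$, I choose $d_i\in D$ with $\rho(x_i,d_i)<\epsilon$, whereupon the triangle inequality gives $B_\delta(x_i)\subseteq B_{\delta+\epsilon}(d_i)$ and hence
\begin{equation*}
\hat N(E,\delta+\epsilon)\leq N(E,\delta) \quad\text{for all } \delta>0,\ \epsilon>0.
\end{equation*}

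To upgrade this inequality to the claimed identities I would invoke the continuity (and strict positivity on $(0,\infty)$) of each gauge function $\varphi_s$. For item 1, if $\liminf_{\delta\to 0^+}N(E,\delta)\varphi_s(\delta)=0$ along a sequence $\delta_k\to 0^+$, then continuity of $\varphi_s$ lets me pick $\epsilon_k\in(0,\delta_k)$ with $\varphi_s(\delta_k+\epsilon_k)\leq 2\varphi_s(\delta_k)$, and the displayed inequality forces $\hat N(E,\delta_k+\epsilon_k)\varphi_s(\delta_k+\epsilon_k)\to 0$. For item 2, if $N(E,\delta)\varphi_s(\delta)\to 0$ as $\delta\to 0^+$, then for each $\delta'>0$ I use continuity at $\delta'$ to pick $\epsilon(\delta')\in(0,\delta')$ with $\varphi_s(\delta')\leq 2\varphi_s(\delta'-\epsilon(\delta'))$; applying the displayed inequality with $\delta=\delta'-\epsilon(\delta')$ yields
\begin{equation*}
\hat N(E,\delta')\varphi_s(\delta')\leq 2\, N(E,\delta'-\epsilon(\delta'))\,\varphi_s(\delta'-\epsilon(\delta'))\to 0,
\end{equation*}
since $\delta'-\epsilon(\delta')\to 0^+$ as $\delta'\to 0^+$. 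Taking infima over $s$ in each case gives the two equalities.

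I do not expect a real obstacle here: the only thing to verify is that the continuity of gauge functions (which is built into the definition) is exactly enough to absorb the $\epsilon$-perturbation introduced by the density approximation, so that no doubling hypothesis on $\varphi$ is needed.
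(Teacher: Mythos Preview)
Your proposal is correct and follows essentially the same approach as the paper's proof: the paper first isolates the inequality $\hat N(E,\hat\delta)\leq N(E,\delta)$ for $\hat\delta>\delta$ (your displayed inequality with $\hat\delta=\delta+\epsilon$) as a separate observation, and then uses right continuity of $\varphi_s$ for item~1 and left continuity for item~2 to absorb the radius perturbation, exactly as you do. The only cosmetic difference is that the paper phrases the $\epsilon$--$\delta$ bookkeeping in terms of a target $\hat\delta$ rather than an increment $\epsilon$, but the logical content is identical.
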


	The following relationship between upper Minkowski dimension and packing dimension was previously known to hold for the canonical gauge family $\theta$, a result that is essentially due to Tricot~\cite{Tric82}. Our proof of this gauged generalization, which is in the appendix, is adapted from the presentation by Bishop and Peres~\cite{BisPer16} of the un-gauged proof.

	\begin{lemma}[generalizing Tricot~\cite{Tric82}]\label{lem:tricot}
	Let $X$ be any metric space, $E\subseteq X$, and $\varphi$ a gauge family.
	\begin{enumerate}
		\item If $\varphi_t(2\delta)=O(\varphi_s(\delta))$ as $\delta\to 0^+$ for all $s<t$, then \[\displaystyle\dimpack^\varphi(E) \geq \inf\left\{\sup_{i\in\N} \Dimm^\varphi(E_i)\,\middle|\, E\subseteq \bigcup_{i\in\N} E_i\right\}.\]
		\item If there is a precision family for $\varphi$, then \[\displaystyle\dimpack^\varphi(E) \leq \inf\left\{\sup_{i\in\N} \Dimm^\varphi(E_i)\,\middle|\, E\subseteq \bigcup_{i\in\N} E_i\right\}.\]
	\end{enumerate}
	\end{lemma}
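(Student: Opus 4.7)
The plan is to run a Tricot-style argument, lifted to gauged dimensions in a general metric space, proving each inequality separately.

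For part 1, I would fix $s>\dimpack^\varphi(E)$ and $t>s$. Since $P^{\varphi_s}(E)=0$, the definition of the packing measure yields a countable cover $E\subseteq\bigcup_i E_i$ in which every $P^{\varphi_s}_0(E_i)$ is finite, and it suffices to show that each such $E_i$ (after a further countable refinement if necessary) has $\varphi$-gauged upper Minkowski dimension at most $t$; letting $t\downarrow s\downarrow \dimpack^\varphi(E)$ then gives the inequality. To this end, for a set $A$ with $P^{\varphi_s}_0(A)<\infty$, I take a maximal disjoint family of open balls of radius $\delta/2$ centered in $A$. This family is simultaneously a $\delta$-packing, so its count $M$ satisfies $M\varphi_s(\delta)\lesssim P^{\varphi_s}_\delta(A)=O(1)$ (when the balls have diameter comparable to $\delta$), and by maximality, when doubled in radius it covers $A$, giving $N(A,\delta)\leq M=O(1/\varphi_s(\delta))$. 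The doubling hypothesis $\varphi_t(2\delta)=O(\varphi_s(\delta))$ together with the gauge-family relation $\varphi_t=o(\varphi_s)$ then yields $N(A,\delta)\varphi_t(\delta)\to 0$, establishing $\Dimm^\varphi(A)\leq t$.

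For part 2, I would fix a cover $E\subseteq\bigcup_i E_i$ with $\sup_i\Dimm^\varphi(E_i)<s$ and, for each $i$, pick $t_i\in(\Dimm^\varphi(E_i),s)$ so that $N(E_i,\delta)\varphi_{t_i}(\delta)\leq C_i$ for all small $\delta$. It suffices to show $P^{\varphi_s}_0(E_i)=0$ for every $i$, since summing over $i$ gives $P^{\varphi_s}(E)\leq\sum_i P^{\varphi_s}_0(E_i)=0$ and therefore $\dimpack^\varphi(E)\leq s$. To bound an arbitrary $\delta$-packing of $E_i$, I would partition its balls into dyadic levels indexed by the precision sequence $\alpha_{t_i}$, placing a ball in level $r$ whenever its diameter lies in $(\alpha_{t_i}(r+1),\alpha_{t_i}(r)]$. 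The standard packing-covering inequality bounds the level-$r$ count by $O(N(E_i,\alpha_{t_i}(r+1)))=O(1/\varphi_{t_i}(\alpha_{t_i}(r+1)))$, and each ball contributes at most $\varphi_s(\alpha_{t_i}(r))$ to the sum. Absorbing the $r\mapsto r+1$ shift via $\varphi_{t_i}(\alpha_{t_i}(r))=O(\varphi_{t_i}(\alpha_{t_i}(r+1)))$ from the precision-sequence growth condition, the total is dominated by $\sum_{r\geq r_0}\varphi_s(\alpha_{t_i}(r))/\varphi_{t_i}(\alpha_{t_i}(r))$. This series converges by the precision-family summability condition (valid because $t_i<s$), so its tail vanishes as $r_0\to\infty$, i.e., as $\delta\to 0^+$.

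The main obstacle I expect is the packing-versus-covering relationship in a general metric space: the diameter of an open ball of radius $r$ may be strictly less than $2r$, and even $0$ at an isolated center, so the lower bound $M\varphi_s(\delta)\lesssim P^{\varphi_s}_\delta(A)$ in part 1 need not hold outright. I anticipate handling this by preparing the cover in advance---for instance, using the fact that in a separable metric space every discrete subset of any set is countable to peel off the isolated points of each $E_i$ as singletons (each of Minkowski dimension $0$) and apply the main argument to what remains, iterating on residual scattered parts if necessary. The dyadic-scale bookkeeping in part 2, including the simultaneous shifts $r\mapsto r+1$ and $\delta\mapsto\delta/2$ absorbed by the doubling and precision-family growth conditions, is more routine once the framework is in place.
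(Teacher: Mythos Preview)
Your proposal is correct and follows essentially the same Tricot-style argument as the paper (which adapts Bishop--Peres): Part~1 via the packing/covering comparison for sets of finite $\varphi_s$-packing pre-measure together with the doubling hypothesis, and Part~2 by stratifying a packing along the precision scales $\alpha(r)$ and invoking the summability condition---the paper runs Part~2 as the contrapositive (assume $P^{\varphi_t}_0(F)>0$ and force $\limsup N_p(F,\delta)\varphi_s(\delta)>0$), but it is the same computation read in the other direction. Your concern about $\diam B_r(x)<2r$ in a general metric space is well-spotted, and the paper glosses over exactly this point; note, though, that peeling off isolated points does not by itself resolve it, since balls around non-isolated points can still have small diameter.
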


\section{Gauged Algorithmic Dimensions}\label{sec:gad}

In this section we formulate algorithmic dimensions in arbitrary
separable metric spaces and with arbitrary gauge families.

For the rest of this paper, let $X=(X,\rho)$ be a separable metric space, and fix a function $f:\binary\to X$ such that the set $D=\range(f)$ is dense in $X$. The metric space $X$ is \emph{computable} if there is a computable function $g:(\binary)^2\times \Q^+ \to \Q$ that approximates $\rho$ on $D$ in the sense that, for all $v, w\in\binary$ and $\delta\in\Q^+$. \[|g(v, w, \delta)-\rho(f(v), f(w))|\le \delta.\] Our results here hold for all separable metric spaces, whether or not they are computable, but our methods make explicit use of the function $f$.

Following standard practice \cite{Nies09, DowHir10, LiVit19}, fix a
universal oracle Turing machine $U$, and define the \emph{(plain)
Kolmogorov complexity\/} of a string $w\in \binary$ \emph{relative
to} an oracle $A\subseteq \N$ to be
\[\C^A(w)=\min\myset{|\pi|}{\pi\in\binary\mbox{ and }U^A(\pi)=w},\]
i.e., the minimum number of bits required to cause $U$ to output $w$
when it has access to the oracle $A$. The \emph{(plain) Kolmogorov complexity} of $w$ is then $\C(w)=\C^\emptyset(w)$.

We define the \emph{(plain) Kolmogorov complexity} of a point $q\in D$ to
be
\[\C(q)=\min\myset{\C(w)}{w\in\binary\mbox{ and }f(w)=q},\]
noting that this depends on the enumeration $f$ of $D$ that we have
fixed.

The \emph{Kolmogorov complexity\/} of a point $x\in X$ at \emph{precision} $\delta\in (0,\infty)$ is
\[\C_\delta(x)=\min\myset{\C(q)}{q\in D\mbox{ and }\rho(q,x)< \delta}.\]
The {\em algorithmic dimension\/} of a point $x\in X$ is
\begin{equation}\label{eq21}
	\dim(x) = \liminf_{\delta\to 0^+} \frac{\C_\delta(x)}{\log(1/\delta)},
\end{equation}
and the {\em strong algorithmic dimension\/} of $x$ is
\begin{equation}\label{eq22}
	\Dim(x) = \limsup_{\delta\to 0^+} \frac{\C_\delta(x)}{\log(1/\delta)}.
\end{equation}

These two dimensions\footnote{The definitions given here differ slightly from the standard formulation in which prefix Kolmogorov complexity is used instead of plain Kolmogorov complexity and the precision parameter $\delta$ belongs to $\{2^{-r}\mid r\in\N\}$. The present formulation is equivalent to the standard one for un-gaugued dimensions and facilitates our generalization to gauged algorithmic dimensions. In particular, plain Kolmogorov complexity is only needed to accommodate gauge functions $\varphi$ in which the convergence of $\varphi$ to 0 as $\delta\to 0^+$ is very slow.} have been extensively investigated in the special cases where $X$ is a Euclidean space $\Rn$ or a sequence space $\Sigma^{\omega}$~\cite{LutLut20,DowHir10}.

Having generalized algorithmic dimensions to arbitrary separable
metric spaces, we now generalize them to arbitrary gauge families.

Let $\varphi=\myset{\varphi_s}{s\in(0, \infty)}$ be a gauge family. Then, the {\em $\varphi$-gauged algorithmic dimension} of a point $x\in X$ is
\begin{equation}\label{eq23}
	\dim^{\varphi}(x) =\inf\myset{s}{\liminf_{\delta\to 0^+}2^{\C_\delta(x)}\varphi_s(\delta)=0},
\end{equation}
and the {\em $\varphi$-gauged strong algorithmic dimension\/} of $x$ is
\begin{equation}\label{eq24}
	\Dim^{\varphi}(x) =\inf\myset{s}{\limsup_{\delta\to 0^+}2^{\C_\delta(x)}\varphi_s(\delta)=0},
\end{equation}

Gauged algorithmic dimensions $\dim^{\varphi}(x)$ have been
investigated by Staiger \cite{Stai17}\ in the special case where $X$
is a sequence space $\Sigma^{\omega}$.

A routine inspection of~\eqref{eq21}--\eqref{eq24} verifies the following.
\begin{observation}\label{obs21} For all $x\in X$, $\dim^{\theta}(x)=\dim(x)$ and $\Dim^{\theta}(x)=\Dim(x)$,
where $\theta$ is the canonical gauge family given
by $\theta_s(\delta)=\delta^s$.
\end{observation}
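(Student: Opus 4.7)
The plan is to simply unfold both definitions for $\varphi=\theta$ and show that the infima coincide by an elementary limit-manipulation argument. Since $\theta_s(\delta)=\delta^s$, we have
\[
2^{\K_\delta(x\mid\delta)}\theta_s(\delta)\;=\;2^{\K_\delta(x\mid\delta)-s\log(1/\delta)}\;=\;2^{\log(1/\delta)\left(\frac{\K_\delta(x\mid\delta)}{\log(1/\delta)}-s\right)}.
\]
Because $\log(1/\delta)\to\infty$ as $\delta\to 0^+$, this product tends to $0$ along a subsequence iff the bracketed ratio $r(\delta)=\K_\delta(x\mid\delta)/\log(1/\delta)$ is strictly less than $s$ along some subsequence approaching $0$, and tends to $\infty$ along every subsequence iff $r(\delta)>s$ on a tail.

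First I would handle $\dim$. Writing $d=\dim(x)=\liminf_{\delta\to 0^+}r(\delta)$, I show two directions. For $s>d$, choose $\varepsilon>0$ with $d+\varepsilon<s$; by definition of $\liminf$ there is a sequence $\delta_n\to 0^+$ with $r(\delta_n)<d+\varepsilon$, so
\[
\log(1/\delta_n)\bigl(r(\delta_n)-s\bigr)\;\le\;-\log(1/\delta_n)\,(s-d-\varepsilon)\;\to\;-\infty,
\]
whence $\liminf_{\delta\to 0^+}2^{\K_\delta(x\mid\delta)}\theta_s(\delta)=0$, giving $\dim^\theta(x)\le s$. For $s<d$, pick $\varepsilon>0$ with $s+\varepsilon<d$; then $r(\delta)\ge s+\varepsilon$ for all sufficiently small $\delta$, so
\[
\log(1/\delta)\bigl(r(\delta)-s\bigr)\;\ge\;\varepsilon\,\log(1/\delta)\;\to\;\infty,
\]
meaning $\liminf 2^{\K_\delta(x\mid\delta)}\theta_s(\delta)=\infty\ne 0$, so $\dim^\theta(x)\ge s$. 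Taking infima of permissible $s$ yields $\dim^\theta(x)=d=\dim(x)$.

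The argument for $\Dim$ is identical except that $\liminf$ is replaced by $\limsup$ throughout: writing $D=\Dim(x)=\limsup r(\delta)$, for $s>D$ the ratio satisfies $r(\delta)<s-\varepsilon$ on a tail, forcing $\limsup 2^{\K_\delta(x\mid\delta)}\theta_s(\delta)=0$, whereas for $s<D$ the ratio exceeds $s+\varepsilon$ along a subsequence, making the $\limsup$ infinite. The key technical point (and the only place one must be careful) is the equivalence ``$\liminf 2^{f(\delta)}=0 \iff \liminf f(\delta)=-\infty$'' and its $\limsup$ analogue; everything else is just reading off definitions. No obstacle is anticipated, since the conditioning on $\delta$ plays no role when $\theta_s(\delta)\to 0$ at a fixed polynomial rate.
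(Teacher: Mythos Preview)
Your argument is correct and is exactly the ``routine inspection of (\ref{eq21})--(\ref{eq24})'' that the paper invokes in lieu of a written proof. The paper does not supply any details, so there is nothing to compare beyond noting that your unfolding of $2^{\K_\delta(x\mid\delta)}\theta_s(\delta)=2^{\log(1/\delta)(r(\delta)-s)}$ and the subsequent $\liminf$/$\limsup$ case analysis is precisely the intended verification.
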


A specific investigation of algorithmic (or classical) dimensions
might call for a particular gauge function on family for one of two reasons.
First, many gauge functions may assign the same dimension to an
object under consideration (because they converge
to 0 at somewhat similar rates as $\delta\to 0^{+}$) but additional
considerations may identify one of these as being the
most precisely tuned to the phenomenon of interest. Finding such a
gauge function is called finding the ``exact dimension'' of the object
under investigation. This sort of calibration has been studied
extensively for classical dimensions \cite{Falc14,Roge98}\ and by
Staiger \cite{Stai17}\ for algorithmic dimension.

The second reason, and the reason of interest to us here, why
specific investigations might call for particular gauge families is
that a given gauge family $\varphi$ may be so completely out of tune
with the phenomenon under investigation that the $\varphi$-gauged
dimensions of the objects of interest are either all minimum (all 0)
or else all maximum (all the same dimension as the space $X$
itself). In such a circumstance, a gauge family that converges to 0
more quickly or slowly than $\varphi$ may yield more informative
dimensions. Several such circumstances were investigated in a
complexity-theoretic setting by Hitchcock, J. Lutz, and Mayordomo
\cite{SDNC}.

The following routine observation indicates the direction in which
one adjusts a gauge family's convergence to 0 in order to adjust the
resulting gauged dimensions upward or downward.

\begin{observation}\label{obs22}
	If $\varphi$ and $\psi$ are gauge families with $\varphi_s(\delta)=o(\psi_s(\delta))$ as $\delta\to 0^{+}$ for all $s\in(0,\infty)$, then, for all $x\in X$, $\dim^{\varphi}(x)\le \dim^{\psi}(x)$ and $\Dim^{\varphi}(x)\le \Dim^{\psi}(x)$.
\end{observation}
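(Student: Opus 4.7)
The plan is to show that the set of exponents $s$ defining $\Dim^\psi(x)$ in (2.4) is a subset of the set defining $\Dim^\varphi(x)$, and likewise for $\dim$ via (2.3). Since the infimum over a superset is no larger, the two desired inequalities follow at once.

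For the $\Dim$ inequality, I would fix $s\in(0,\infty)$ satisfying $\limsup_{\delta\to 0^+}2^{\K_\delta(x\mid \delta)}\psi_s(\delta)=0$ (equivalently, the limit is $0$, since the expression is nonnegative). The key algebraic step is the factorization
\[
2^{\K_\delta(x\mid \delta)}\varphi_s(\delta) \;=\; \bigl[2^{\K_\delta(x\mid \delta)}\psi_s(\delta)\bigr]\cdot\frac{\varphi_s(\delta)}{\psi_s(\delta)},
\]
which is legitimate because gauge functions vanish only at $0$, so $\psi_s(\delta)>0$ whenever $\delta>0$. The first bracket tends to $0$ by the choice of $s$, and the ratio on the right tends to $0$ by the hypothesis $\varphi_s(\delta)=o(\psi_s(\delta))$. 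Hence the product tends to $0$, placing $s$ in the set defining $\Dim^\varphi(x)$.

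For the $\dim$ inequality, suppose $\liminf_{\delta\to 0^+}2^{\K_\delta(x\mid \delta)}\psi_s(\delta)=0$ and select a witnessing sequence $\delta_n\to 0^+$ along which $2^{\K_{\delta_n}(x\mid \delta_n)}\psi_s(\delta_n)\to 0$. Applying the same factorization along this sequence and invoking the full limit $\varphi_s(\delta_n)/\psi_s(\delta_n)\to 0$, one concludes that $2^{\K_{\delta_n}(x\mid \delta_n)}\varphi_s(\delta_n)\to 0$, so the $\liminf$ for $\varphi$ is also $0$ and $s$ belongs to the set defining $\dim^\varphi(x)$. The argument is essentially a set-inclusion calculation with no genuine obstacle; the only subtle point is the nonvanishing of $\psi_s(\delta)$ for $\delta>0$, which is immediate from the definition of a gauge function.
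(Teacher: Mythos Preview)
Your argument is correct. The paper does not actually prove this observation; it is stated as ``routine'' and left without proof. Your set-inclusion verification via the factorization $2^{\K_\delta(x\mid\delta)}\varphi_s(\delta)=\bigl[2^{\K_\delta(x\mid\delta)}\psi_s(\delta)\bigr]\cdot\varphi_s(\delta)/\psi_s(\delta)$ is exactly the kind of direct check the authors had in mind, and the care you take distinguishing the $\limsup$ case (where the full limit is zero) from the $\liminf$ case (where you pass to a witnessing sequence) is appropriate. One minor remark: in the $\Dim$ argument you only need the ratio $\varphi_s/\psi_s$ to be bounded, not to vanish, so the hypothesis $\varphi_s=o(\psi_s)$ is slightly stronger than necessary there; the full strength is used only in the $\dim$ case, where the first factor along the subsequence need not be bounded---but wait, it does tend to zero along that subsequence, so again boundedness of the ratio would suffice. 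In fact $\varphi_s=O(\psi_s)$ would be enough for both inequalities, though this does not affect the validity of what you wrote.
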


We now define an operation on gauge families that is implicit in earlier work~\cite{McC96} and is explicitly used in the results of Section~\ref{sec:hdts}.

\begin{defn}[jump]
	The {\em jump\/} of a gauge family $\varphi$ is the family $\widetilde{\varphi}$ given $\widetilde{\varphi}_s(\delta)=2^{-1/\varphi_s(\delta)}$.
\end{defn}

\begin{observation}\label{obs23}
	The jump of a gauge family is a gauge family.
\end{observation}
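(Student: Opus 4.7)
The plan is to verify both defining properties of a gauge family for $\widetilde{\varphi}$: (i) for each $s \in (0,\infty)$, the function $\widetilde{\varphi}_s$ is itself a gauge function on $[0,\infty)$, and (ii) $\widetilde{\varphi}_s(\delta) = o(\widetilde{\varphi}_t(\delta))$ as $\delta \to 0^+$ whenever $s > t$. Both parts should follow by direct computation, so no deep ideas are needed; the work is just to arrange the estimates carefully near the origin.

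For (i), I would first extend $\widetilde{\varphi}_s$ to $\delta = 0$ by setting $\widetilde{\varphi}_s(0) = 0$, noting that $\varphi_s(\delta) \to 0^+$ as $\delta \to 0^+$ forces $1/\varphi_s(\delta) \to \infty$ and hence $2^{-1/\varphi_s(\delta)} \to 0$, so this extension makes $\widetilde{\varphi}_s$ continuous at $0$; elsewhere, continuity is immediate since $\varphi_s$ is continuous and positive and $t \mapsto 2^{-1/t}$ is continuous on $(0,\infty)$. Monotonicity is similarly direct: if $0 < \delta_1 \le \delta_2$, then $0 < \varphi_s(\delta_1) \le \varphi_s(\delta_2)$, so $1/\varphi_s(\delta_1) \ge 1/\varphi_s(\delta_2)$, hence $\widetilde{\varphi}_s(\delta_1) \le \widetilde{\varphi}_s(\delta_2)$. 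Finally, $\widetilde{\varphi}_s$ vanishes only at $0$: for $\delta > 0$, $\varphi_s(\delta) > 0$ gives $1/\varphi_s(\delta) < \infty$ and therefore $\widetilde{\varphi}_s(\delta) > 0$.

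For (ii), fix $s > t$ and compute
\[
\frac{\widetilde{\varphi}_s(\delta)}{\widetilde{\varphi}_t(\delta)} = 2^{\,1/\varphi_t(\delta)\,-\,1/\varphi_s(\delta)}.
\]
The hypothesis $\varphi_s(\delta) = o(\varphi_t(\delta))$ implies $\varphi_t(\delta)/\varphi_s(\delta) \to \infty$ as $\delta \to 0^+$, so in particular $\varphi_s(\delta) < \varphi_t(\delta)/2$ for all sufficiently small $\delta$. For such $\delta$,
\[
\frac{1}{\varphi_t(\delta)} - \frac{1}{\varphi_s(\delta)} < \frac{1}{\varphi_t(\delta)} - \frac{2}{\varphi_t(\delta)} = -\frac{1}{\varphi_t(\delta)},
\]
and the right side tends to $-\infty$ since $\varphi_t(\delta) \to 0^+$. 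Hence the exponent diverges to $-\infty$, forcing the ratio to $0$, which is precisely the little-o condition.

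There is no real obstacle here; the only subtlety is that the definition of $\widetilde{\varphi}_s$ as written is not literally defined at $\delta = 0$, so one must specify (or check by limit) the value there to make $\widetilde{\varphi}_s$ a function on all of $[0,\infty)$. Once this harmless extension is made, every clause of the gauge-family definition is immediate from the corresponding property of $\varphi$ together with the monotonicity and continuity of $t \mapsto 2^{-1/t}$.
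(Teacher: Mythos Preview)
Your argument is correct in every detail. The paper states this observation without proof, treating it as routine; your verification of both parts of the gauge-family definition is exactly the expected direct check, and the only point requiring any care---the extension of $\widetilde{\varphi}_s$ to $\delta=0$ by continuity---you handle explicitly.
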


We now note that the jump of a gauge family always converges to 0
more quickly than the original gauge family.

\begin{lemma}\label{lemm24}
	For all gauge families $\varphi$ and all $s\in (0, \infty)$, $\widetilde{\varphi}_s(\delta)=o(\varphi_s(\delta))$ as $\delta\to 0^{+}$.
\end{lemma}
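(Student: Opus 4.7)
The plan is to reduce the claim to the elementary fact that $2^u$ dominates every polynomial in $u$ as $u \to \infty$, applied after the substitution $u = 1/\varphi_s(\delta)$.

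First I would observe that, because $\varphi_s$ is continuous on $[0,\infty)$ and vanishes only at $0$, we have $\varphi_s(0)=0$ and therefore $\varphi_s(\delta) \to 0$ as $\delta \to 0^+$ (with $\varphi_s(\delta)>0$ for $\delta>0$). This lets me set $t = \varphi_s(\delta)$ and work with $t \to 0^+$.

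Next, unpacking the definition of the jump, the ratio under consideration is
\[
\frac{\widetilde{\varphi}_s(\delta)}{\varphi_s(\delta)} \;=\; \frac{2^{-1/\varphi_s(\delta)}}{\varphi_s(\delta)} \;=\; \frac{1}{t\,2^{1/t}}.
\]
Setting $u = 1/t$, so that $u \to \infty$ as $\delta \to 0^+$, this becomes $u/2^{u}$, which tends to $0$ because the exponential $2^u$ grows faster than the linear function $u$ (a one-line estimate via, e.g., $2^u \geq 1 + u + u^2/2$ for $u \geq 0$). Chaining the limits gives $\widetilde{\varphi}_s(\delta)/\varphi_s(\delta) \to 0$, i.e., $\widetilde{\varphi}_s(\delta) = o(\varphi_s(\delta))$ as $\delta \to 0^+$, which is the desired conclusion.

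There is no real obstacle here; the only nontrivial ingredient is that $\varphi_s(\delta)$ genuinely approaches $0$, and that follows immediately from the continuity and non-vanishing part of the definition of a gauge function. The lemma is essentially a sanity check that the jump operation lives up to its name, and the proof is a two-line computation once the substitution is made.
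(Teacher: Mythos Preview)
Your proof is correct and follows essentially the same route as the paper's: substitute a new variable for $\varphi_s(\delta)$, use that it tends to $0$ by continuity of the gauge function, and reduce to the elementary fact that an exponential dominates a linear function. One nitpick: the parenthetical estimate $2^u \geq 1 + u + u^2/2$ is false (try $u=1$) --- you are thinking of the Taylor bound for $e^u$ --- but the conclusion $u/2^u \to 0$ is standard and does not depend on that particular inequality.
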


Observation \ref{obs23}\ and Lemma \ref{lemm24}\ immediately imply
the following.

\begin{corollary}\label{cor25} For all gauge families $\varphi$ and all $x\in
X$, $\dim^{\widetilde{\varphi}}(x)\le \dim^{\varphi}(x)$ and $\Dim^{\widetilde{\varphi}}(x)\le \Dim^{\varphi}(x)$.
\end{corollary}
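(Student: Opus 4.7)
The plan is to derive this corollary as a direct composition of the three results that immediately precede it in this section. Specifically, I would chain Observation~\ref{obs23} (the jump of a gauge family is itself a gauge family), Lemma~\ref{lemm24} ($\widetilde{\varphi}_s(\delta) = o(\varphi_s(\delta))$ as $\delta \to 0^+$ for every $s \in (0, \infty)$), and the monotonicity principle of Observation~\ref{obs22} (if one gauge family is $o$ of another, its associated algorithmic and strong algorithmic dimensions are pointwise no larger).

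The only hypothesis-checking required is that Observation~\ref{obs22} be legally applicable with $\widetilde{\varphi}$ playing the role of the faster-vanishing family and $\varphi$ playing the role of the slower-vanishing family. The underlying assumption of the observation is that both arguments are gauge families: the original $\varphi$ is one by hypothesis, and Observation~\ref{obs23} supplies the same for $\widetilde{\varphi}$. The vanishing comparison $\widetilde{\varphi}_s(\delta) = o(\varphi_s(\delta))$ for every $s \in (0,\infty)$ is exactly the content of Lemma~\ref{lemm24}. With both hypotheses in hand, a single invocation of Observation~\ref{obs22} delivers $\dim^{\widetilde{\varphi}}(x) \leq \dim^{\varphi}(x)$ and $\Dim^{\widetilde{\varphi}}(x) \leq \Dim^{\varphi}(x)$ for every $x \in X$ simultaneously, which is the full statement of the corollary.

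There is no substantive obstacle here; the argument is purely mechanical once the preceding results have been established, which is precisely why the excerpt advertises the corollary as an \emph{immediate} consequence of Observation~\ref{obs23} and Lemma~\ref{lemm24}. Conceptually, the point is that the faster convergence of $\widetilde{\varphi}_s$ to $0$ enlarges the set of exponents $s$ for which $\liminf_{\delta\to 0^+} 2^{\K_\delta(x\mid\delta)}\widetilde{\varphi}_s(\delta) = 0$ (respectively, $\limsup_{\delta\to 0^+} 2^{\K_\delta(x\mid\delta)}\widetilde{\varphi}_s(\delta) = 0$), so the infimum defining each jump-gauged dimension can only be at most the corresponding infimum for $\varphi$.
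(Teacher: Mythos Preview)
Your proposal is correct and matches the paper's intended argument exactly: the paper provides no explicit proof, stating only that Observation~\ref{obs23} and Lemma~\ref{lemm24} ``immediately imply'' the corollary, with the application of Observation~\ref{obs22} left implicit. You have simply made that final step explicit, which is entirely appropriate.
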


The definitions and results of this section relativize to arbitrary
oracles $A\subseteq \N$ in the obvious manner, so the Kolmogorov
complexities $\C^A(q)$ and $\C^A_\delta(x)$ and the dimensions
$\dim^A(x)$, $\Dim^A(x)$, $\dim^{\varphi, A}(x)$, and
$\Dim^{\varphi, A}(x)$ are all well-defined and behave as indicated.

\begin{observation}\label{obs:36}
	For all gauge families $\varphi$, all $x\in X$, and all $s>0$,
	\[\log\big(2^{\C_\delta(x)}\widetilde{\varphi}_s(\delta)\big)=\frac{\C_\delta(x)\varphi_s(\delta)-1}{\varphi_s(\delta)}.\]
\end{observation}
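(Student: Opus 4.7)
The plan is to verify this by a direct algebraic calculation, unfolding the definition of the jump $\widetilde{\varphi}$ and using the basic properties of the logarithm. Since the statement is an identity between two explicit expressions involving $\K_\delta(x\mid\delta)$ and $\varphi_s(\delta)$, no approximation or limiting argument is required.

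First I would substitute the definition $\widetilde{\varphi}_s(\delta)=2^{-1/\varphi_s(\delta)}$ into the left-hand side, obtaining
\[
2^{\K_\delta(x\mid\delta)}\widetilde{\varphi}_s(\delta)=2^{\K_\delta(x\mid\delta)-1/\varphi_s(\delta)}.
\]
Next I would apply $\log_2$ to collapse the exponential, which yields
\[
\log\big(2^{\K_\delta(x\mid\delta)}\widetilde{\varphi}_s(\delta)\big)=\K_\delta(x\mid\delta)-\frac{1}{\varphi_s(\delta)}.
\]
Finally I would place the two terms over the common denominator $\varphi_s(\delta)$ to recover the claimed form $\frac{\K_\delta(x\mid\delta)\varphi_s(\delta)-1}{\varphi_s(\delta)}$.

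There is no real obstacle here: the identity is purely formal, and the only point worth noting is that $\varphi_s(\delta)>0$ for $\delta>0$ (since a gauge function vanishes only at $0$), so the division by $\varphi_s(\delta)$ is legitimate. The observation is included precisely because the resulting expression will be convenient in later arguments relating $\widetilde{\varphi}$-gauged algorithmic dimensions to the quantity $\K_\delta(x\mid\delta)\varphi_s(\delta)$.
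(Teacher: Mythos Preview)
Your proposal is correct and is exactly the intended direct calculation; the paper states this as an observation without proof, since it is immediate from the definition of the jump and the laws of logarithms.
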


The $\widetilde{\varphi}$-gauged algorithmic dimensions admit the following characterizations, the second of which is used in the proof of our hyperspace packing dimension theorem.

\begin{theorem}\label{thm:DimL}
	For all gauge families $\varphi$ and all $x\in X$, the following identities hold.
	\begin{enumerate}
		\item $\displaystyle \dim^{\widetilde{\varphi}}(x)=\inf\left\{s\;\middle|\;\liminf_{\delta\to 0^+}\C_\delta(x)\varphi_s(\delta)=0\right\}$. 
		\item $\displaystyle \Dim^{\widetilde{\varphi}}(x)=\inf\left\{s\;\middle|\;\limsup_{\delta\to 0^+}\C_\delta(x)\varphi_s(\delta)=0\right\}$.
	\end{enumerate}
\end{theorem}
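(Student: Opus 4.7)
The plan is to use Observation~\ref{obs:36} to translate between $2^{\K_\delta(x\mid\delta)}\widetilde{\varphi}_s(\delta)$, the quantity appearing in the definitions of $\dim^{\widetilde{\varphi}}(x)$ and $\Dim^{\widetilde{\varphi}}(x)$, and the quantity $\K_\delta(x\mid\delta)\varphi_s(\delta)$ appearing on the right-hand sides of the two identities. Since $2^{\K_\delta(x\mid\delta)}\widetilde{\varphi}_s(\delta) = 2^{(\K_\delta(x\mid\delta)\varphi_s(\delta)-1)/\varphi_s(\delta)}$ and $\varphi_s(\delta)\to 0^{+}$ as $\delta\to 0^{+}$, the former tends to $0$ along a sequence (respectively everywhere near $0$) precisely when the exponent $(\K_\delta(x\mid\delta)\varphi_s(\delta)-1)/\varphi_s(\delta)$ tends to $-\infty$. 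Parts~1 and~2 are structurally identical, so I will describe Part~1 and then note the modifications for Part~2.

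Write $L(s) = \liminf_{\delta\to 0^{+}}\K_\delta(x\mid\delta)\varphi_s(\delta)$ and $L'(s) = \liminf_{\delta\to 0^{+}}2^{\K_\delta(x\mid\delta)}\widetilde{\varphi}_s(\delta)$. The easy direction is $L(s)=0 \Rightarrow L'(s)=0$: along any sequence $\delta_n\to 0$ witnessing $L(s)=0$, the numerator $\K_{\delta_n}(x\mid\delta_n)\varphi_s(\delta_n)-1$ tends to $-1$ while the denominator $\varphi_s(\delta_n)$ tends to $0^{+}$, sending the exponent to $-\infty$ and hence $L'(s)$ to $0$.

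The converse, which I expect to be the main obstacle, must invoke the defining gauge-family property $\varphi_{s'}(\delta) = o(\varphi_s(\delta))$ for $s'>s$. A witnessing sequence for $L'(s)=0$ only guarantees that $\K_{\delta_n}(x\mid\delta_n)\varphi_s(\delta_n)\leq 1$ eventually along the subsequence (since, for any fixed $M$, the ratio being $<-M$ gives $\K_{\delta_n}(x\mid\delta_n)\varphi_s(\delta_n) < 1 - M\varphi_s(\delta_n)$), and this product need not itself tend to $0$. However, multiplying the uniform bound $\K_{\delta_n}(x\mid\delta_n)\varphi_s(\delta_n)\leq 1$ by $\varphi_{s'}(\delta_n)/\varphi_s(\delta_n)\to 0$ yields $\K_{\delta_n}(x\mid\delta_n)\varphi_{s'}(\delta_n)\to 0$, so $L(s')=0$ for every $s'>s$.

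Combining these two implications with the observation that both sublevel sets $\{s : L(s)=0\}$ and $\{s : L'(s)=0\}$ are upward closed in $s$ (the first directly from $\varphi_{s'}=o(\varphi_s)$ for $s'>s$ applied factor-by-factor, the second inherited by chaining the two implications) forces $\inf\{s : L(s)=0\} = \inf\{s : L'(s)=0\}$, proving Part~1. Part~2 follows by the same argument with $\limsup$ replacing $\liminf$: the bound $\K_\delta(x\mid\delta)\varphi_s(\delta)\leq 1$ now holds for all sufficiently small $\delta$ (rather than just along a subsequence), and the rest of the derivation is identical.
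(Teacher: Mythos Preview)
Your proof is correct and follows essentially the same route as the paper's: both use Observation~\ref{obs:36} to rewrite $2^{\K_\delta(x\mid\delta)}\widetilde{\varphi}_s(\delta)$ as $2^{(\K_\delta(x\mid\delta)\varphi_s(\delta)-1)/\varphi_s(\delta)}$, derive the easy implication $L(s)=0\Rightarrow L'(s)=0$ directly, and for the converse extract the bound $\K_\delta(x\mid\delta)\varphi_s(\delta)<1$ from $L'(s)=0$ and then use $\varphi_{s'}=o(\varphi_s)$ for $s'>s$ to conclude $L(s')=0$. The paper packages this as the chain of set inclusions $(\dim^{\widetilde{\varphi}}(x),\infty)\subseteq S^-\subseteq[\dim^{\widetilde{\varphi}}(x),\infty)$ while you phrase it via upward-closed level sets, but the content is identical.
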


\section{The General Point-to-Set Principle}\label{sec3}

We now show that the point-to-set principle of J. Lutz and N. Lutz~\cite{LutLut18}\ holds in arbitrary separable metric spaces and for gauged dimensions. The proofs of these theorems, which can be found in the appendix, are more delicate and involved than those in~\cite{LutLut18}. This is partially due to the fact that the metric spaces here need not be finite-dimensional, and to the weak restrictions we place on the gauge family.

\begin{theorem}[general point-to-set principle for Hausdorff dimension]\label{theo31}
For every separable metric space $X$, every gauge family $\varphi$, and every set $E\subseteq X$,
\[\dimh^{\varphi}(E)\geq\adjustlimits\min_{A\subseteq\N}\sup_{x\in E}\dim^{{\varphi},A}(x).\]
Equality holds if there is a precision family for $\varphi$.
\end{theorem}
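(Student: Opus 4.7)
The plan is to prove the two inequalities separately. The ``easy'' direction, $\dimh^\varphi(E)\ge\min_A\sup_{x\in E}\dim^{{\varphi},A}(x)$, will follow by encoding near-optimal Hausdorff covers of $E$ into a single oracle and using Kraft's inequality to extract short descriptions of every $x\in E$. The equality, which requires a precision family, will follow by running the same Kraft idea in reverse: short relativized programs for the points of $E$ are assembled into a Hausdorff cover whose mass is controlled by the summability clause of the precision family.

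For the easy direction, fix any $s>\dimh^\varphi(E)$; then $H^{\varphi_s}(E)=0$, so for each $n\in\N$ there is a countable $2^{-n}$-cover $\mathcal{U}_n$ of $E$ with $\sum_{U\in\mathcal{U}_n}\varphi_s(\diam(U))<2^{-n}$. Using separability and the continuity and monotonicity of $\varphi_s$, at the cost of a constant factor in diameters we may assume each $U\in\mathcal{U}_n$ is an open ball centered at a point $d_U\in D$. Encode the sequence $(\mathcal{U}_n)_{n\in\N}$ into an oracle $A$. The bound
\[\sum_{U\in\mathcal{U}_n}2^{-\lceil\log(1/\varphi_s(\diam(U)))\rceil}\le\sum_{U\in\mathcal{U}_n}\varphi_s(\diam(U))<1\]
and Kraft's inequality give each $U\in\mathcal{U}_n$ a prefix code $\sigma(U)$ of length at most $\log(1/\varphi_s(\diam(U)))+O(1)$. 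Given $A$ and the precision $\delta=\diam(U)$, the code $\sigma(U)$ reconstructs $d_U$, so for every $x\in U$,
\[\K^A_\delta(x\mid\delta)\le\log(1/\varphi_s(\delta))+O(1).\]
For any $t>s$ this yields $2^{\K^A_\delta(x\mid\delta)}\varphi_t(\delta)=O\!\left(\varphi_t(\delta)/\varphi_s(\delta)\right)\to 0$, so $\dim^{{\varphi},A}(x)\le s$ for every $x\in E$. Amalgamating the oracles for a sequence $s_k\downarrow\dimh^\varphi(E)$ into a single oracle gives $\sup_{x\in E}\dim^{{\varphi},A}(x)\le\dimh^\varphi(E)$.

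For the equality, assume a precision family $\alpha=\{\alpha_s\}$, fix an arbitrary oracle $A$, and let $s>\sup_{x\in E}\dim^{{\varphi},A}(x)$. I will show that $H^{\varphi_t}(E)=0$ for every $t>s$, forcing $\dimh^\varphi(E)\le s$. For each $x\in E$, the definition of $\dim^{{\varphi},A}$ supplies, for every prescribed constant $c$, arbitrarily small $\delta$ with $\K^A_\delta(x\mid\delta)\le\log(1/\varphi_s(\delta))-c$. Round each such $\delta$ down to the largest $\delta_r=\alpha_s(r)$ not exceeding $\delta$; the precision family's condition $\varphi_s(\alpha_s(r))=O(\varphi_s(\alpha_s(r+1)))$ absorbs the rounding loss, so we still have $\K^A_{\delta_r}(x\mid\delta_r)\le\log(1/\varphi_s(\delta_r))+O(1)$ with a witness $q\in D$ satisfying $\rho(q,x)\le\delta_r$. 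For each $R\in\N$, the collection
\[\mathcal{U}_R=\bigcup_{r\ge R}\left\{B_{\delta_r}(q)\,\middle|\,q\in D,\ \K^A(q\mid\delta_r)\le\log(1/\varphi_s(\delta_r))+O(1)\right\}\]
is then a $2\delta_R$-cover of $E$ (each $x$ lies in some such ball for infinitely many $r$, in particular for some $r\ge R$). Counting programs of bounded length gives
\[\sum_{U\in\mathcal{U}_R}\varphi_t(\diam(U))\le O(1)\cdot\sum_{r\ge R}\frac{\varphi_t(2\delta_r)}{\varphi_s(\delta_r)},\]
which tends to $0$ as $R\to\infty$ by the summability clause of the precision family, after passing from $t$ to a slightly smaller parameter still above $s$ to absorb the factor of $2$ via the $o(\cdot)$ relation in the definition of a gauge family. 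Hence $H^{\varphi_t}(E)=0$.

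The main obstacle is exactly this equality direction. In Euclidean space one enjoys the tight scaling $\theta_s(2\delta)=2^s\theta_s(\delta)$ and works with the discrete precision $2^{-r}$; here the liminf witness $\delta$ can be an arbitrary real and $\varphi_s$ is only continuous, nondecreasing, and positive off $0$, so the precision family must do double duty, both discretizing the witnessing precisions and supplying the summability that bounds the Hausdorff mass of the assembled cover. Shepherding the various $O(1)$ losses in the Kolmogorov complexity and the $2\delta_r$-versus-$\delta_r$ mismatch uniformly across all scales $r\ge R$ is the delicate technical work, and is precisely where the precision-family hypothesis enters in an essential way.
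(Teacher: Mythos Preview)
Your proof follows essentially the same two-step structure as the paper's, and the core ideas match.

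For the $\geq$ direction, you use Kraft's inequality to assign prefix codes across all the covers $(\mathcal{U}_n)_n$ simultaneously (the total mass $\sum_n\sum_{U\in\mathcal{U}_n}\varphi_s(\diam(U))<1$ gives a single prefix code), whereas the paper encodes each $\varepsilon$-cover separately and must then absorb the extra cost $\K^A(\varepsilon)$ of naming the cover. The paper does this by encoding into $A$ the auxiliary function $g(s,t,\varepsilon)=\inf_{\delta\le\varepsilon}\log\sqrt{\varphi_s(\delta)/\varphi_t(\delta)}$ and choosing only those $\varepsilon$ with $\K^A(\varepsilon)\le g(s,t,\varepsilon)$, which exist because $g$ is unbounded and computable from $A$. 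Your Kraft route is a bit more direct and sidesteps this device; both are valid.

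For the $\leq$ direction, you and the paper do the same thing: collect balls centred at low-complexity points of $D$ at the precisions $\alpha_s(r)$ and bound the resulting Hausdorff sum by the summability clause of the precision family. Two glitches in your write-up: (i) your rounding goes the wrong way. Rounding $\delta$ \emph{down} to $\delta_r\le\delta$ destroys the witness inclusion $\rho(q,x)\le\delta_r$; you need to round \emph{up} to $\alpha_s(r)\ge\delta$, after which the precision-sequence condition $\varphi_s(\alpha_s(r))=O(\varphi_s(\alpha_s(r+1)))$ absorbs the loss in the target $\log(1/\varphi_s(\cdot))$. (After rounding up there is still the change of conditional from $\delta$ to $\alpha_s(r)$; neither you nor the paper makes this step explicit.) (ii) Your proposed fix for the factor of $2$ in $\varphi_t(2\delta_r)$ (``pass from $t$ to a slightly smaller parameter via the $o(\cdot)$ relation'') does not follow from the gauge-family axioms, which only compare $\varphi_t(\delta)$ to $\varphi_{t'}(\delta)$ at the \emph{same} argument. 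The paper, for its part, simply writes $\varphi_t(\alpha_s(k))$ in place of $\varphi_t\bigl(\diam(B_{\alpha_s(k)}(\cdot))\bigr)$, i.e., it silently identifies radius with diameter at this step.
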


\begin{theorem}[general point-to-set principle for packing dimension]\label{the32}
	Let $X$ be any separable metric space, $E\subseteq X$, and $\varphi$ a gauge family.
	\begin{enumerate}
	\item If $\varphi_t(2\delta)=O(\varphi_s(\delta))$ and $\varphi_s(\delta)=O(1/\log\log(1/\delta))$ as $\delta\to 0^+$ for all $s<t$, then \[\dimpack^{\varphi}(E)\geq\adjustlimits\min_{A\subseteq\N}\sup_{x\in E}\Dim^{{\varphi},A}(x).\]
	\item If there is a precision family for $\varphi$, then
	\[\dimpack^{\varphi}(E)\leq\adjustlimits\min_{A\subseteq\N}\sup_{x\in E}\Dim^{{\varphi},A}(x).\]
	\end{enumerate}
\end{theorem}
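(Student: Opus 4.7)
The plan is to reduce both inequalities to the analogous statements for upper Minkowski dimension via Lemma~\ref{lem:tricot}, which characterizes packing dimension as the infimum, over countable covers, of the supremum of upper Minkowski dimensions of the pieces; note that the two hypotheses of the theorem correspond precisely to the two hypotheses of that lemma. Throughout I will work with $\hat{N}$ in place of $N$ (Observation~\ref{obs:separabledim}), since covers by balls centred at elements of the enumerated dense set $D$ are exactly what Kolmogorov complexity ``sees.''

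For part 2, fix $s$ strictly greater than $\min_A \sup_{x\in E}\Dim^{\varphi,A}(x)$, let $A$ be a witnessing oracle, and let $\alpha$ be a precision family for $\varphi$. For each $k\in\N$ I would define
\[E_k = \left\{x\in E \,\middle|\, 2^{\K^A_\delta(x\mid\delta)}\varphi_s(\delta)\le 1 \text{ for all } \delta\le\alpha_s(k)\right\},\]
and observe $E=\bigcup_k E_k$, since every $x\in E$ has $\Dim^{\varphi,A}(x)<s$. For any $\delta\le\alpha_s(k)$, each point of $E_k$ is $\delta$-close to some $q\in D$ with $\K^A(q\mid\delta)\le\log(1/\varphi_s(\delta))$, and a Kraft-inequality count leaves at most $1/\varphi_s(\delta)$ such $q$'s, so $\hat{N}(E_k,\delta)\le 1/\varphi_s(\delta)$. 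For every $t>s$ the gauge-family asymptotics then force $\hat{N}(E_k,\delta)\varphi_t(\delta)\to 0$, so $\Dimm^\varphi(E_k)\le s$, and Lemma~\ref{lem:tricot}(2) closes the argument.

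For part 1, fix $s > \dimpack^\varphi(E)$. Lemma~\ref{lem:tricot}(1) produces a countable cover $E\subseteq\bigcup_m E_m$ and $s'<s$ with $\Dimm^\varphi(E_m)\le s'$ for every $m$. For each $(m,r)\in\N\times\N$ I choose $F_{m,r}\subseteq D$ of size $\hat{N}(E_m,2^{-r})$ whose $2^{-r}$-balls cover $E_m$, and let $A$ be any oracle coding $(m,r)\mapsto F_{m,r}$. Given $x\in E_m$ and $\delta>0$, setting $r=\lceil\log(1/\delta)\rceil$, some $q\in F_{m,r}$ satisfies $\rho(q,x)\le 2^{-r}\le\delta$ with $\K^A(q\mid\delta)\le\log\hat{N}(E_m,2^{-r})+\K(m)+O(1)$. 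For any $t>s'$, picking $u\in(s',t)$, the doubling hypothesis gives $\varphi_t(\delta)\le\varphi_t(2\cdot 2^{-r})=O(\varphi_u(2^{-r}))$, while $\hat{N}(E_m,2^{-r})\varphi_u(2^{-r})\to 0$ because $\Dimm^\varphi(E_m)\le s'<u$. Combining, $2^{\K^A_\delta(x\mid\delta)}\varphi_t(\delta)\to 0$, so $\Dim^{\varphi,A}(x)\le t$; letting $t\downarrow s'$ yields $\sup_{x\in E}\Dim^{\varphi,A}(x)\le s'<s$, as required.

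The main obstacle will be the transfer in part 1 from the discrete precisions $2^{-r}$, at which the oracle literally stores covers, to the continuous limit defining $\Dim^{\varphi,A}$. The doubling assumption $\varphi_t(2\delta)=O(\varphi_s(\delta))$ is exactly what buys this: it bounds the loss incurred by rounding $\delta$ down to the nearest dyadic rational and lets the discrete convergence of $\hat{N}(E_m,2^{-r})\varphi_u(2^{-r})$ pass to a continuous convergence of $2^{\K^A_\delta(x\mid\delta)}\varphi_t(\delta)$. In the other direction the existence of a precision family is used (inside Lemma~\ref{lem:tricot}(2)) to pass from Minkowski-dimension bounds on the countable pieces $E_k$ to a packing-dimension bound on their union.
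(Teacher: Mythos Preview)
Your strategy matches the paper's: both parts go through Lemma~\ref{lem:tricot} to reduce packing dimension to upper Minkowski dimension of pieces of a countable cover, then relate Minkowski dimension to Kolmogorov complexity by encoding finite covers into an oracle. Part~2 is essentially identical to the paper's argument (the paper uses the threshold $\delta<1/i$ where you use $\delta\le\alpha_s(k)$, which is immaterial).

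In part~1 there is one implementation difference and one small gap. The paper stores, for each piece and each rational $\delta\in\Q^+$, a minimal $\delta$-cover by $D$-balls; since the limit defining $\Dim^{\varphi,A}$ runs over $\delta\in\Q^+$, this lets the paper bound $\K^A_\delta(x\mid\delta)$ directly with no rounding, so the doubling hypothesis is used only inside Lemma~\ref{lem:tricot}(1). Your choice to store covers only at dyadic precisions is fine too, but it forces the extra step $\varphi_t(\delta)\le\varphi_t(2\cdot 2^{-r})=O(\varphi_u(2^{-r}))$; this works, but it is an avoidable complication.

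The gap is that the theorem asserts a \emph{minimum} over oracles, not merely an infimum. Your oracle $A$ depends on the initially fixed $s>\dimpack^\varphi(E)$ (through the cover $\{E_m\}$), so as written you only obtain $\inf_A\sup_{x\in E}\Dim^{\varphi,A}(x)\le\dimpack^\varphi(E)$. The paper handles this by taking, for every $k\in\N$, a cover $\{E^k_i\}_i$ with $\sup_i\Dimm^\varphi(E^k_i)<\dimpack^\varphi(E)+2^{-k}$ and encoding \emph{all} of these covers (over all $i,k,\delta$) into a single oracle $A$; then $\Dim^{\varphi,A}(x)\le s(k)$ for every $k$, whence $\Dim^{\varphi,A}(x)\le\dimpack^\varphi(E)$. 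You can repair your argument the same way, or equivalently take a sequence $s_n\downarrow\dimpack^\varphi(E)$, build the corresponding oracles $A_n$, and let $A=\bigoplus_n A_n$.
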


\begin{proof}[Proof of Theorem~\ref{the32}]
\begin{enumerate}
\item Assume that $\varphi_t(2\delta)=O\left(\varphi_s(\delta)\right)$ and $\varphi_s(\delta)=O(1/\log\log(1/\delta))$ hold for all $s<t$. It suffices to show that there exists $A\subseteq\N$ such that, for all $x\in E$,
\begin{equation}\label{eq:the32A}
	\Dim^{\varphi,A}(x)\leq \dimpack^\varphi(E).
\end{equation}

Let $s>t>u>\dimpack^\varphi(E)$. Since $u>\dimpack^\varphi(E)$, Lemma~\ref{lem:tricot} and our hypothesis on $\varphi$ tell us that there is a cover $\{E_i\}_{i\in\Z^+}$ of $E$ such that, for all $i\in\Z^+$,
\begin{equation}\label{eq:the32B}
	\Dimm^\varphi(E_i)\leq u.
\end{equation}

For each $i\in\Z^+$ and $\delta\in\Q\cap(0,1)$, let $F(i,\delta)\subseteq D$ satisfy \[|F(i,\delta)|=\hat{N}(E_i,\delta)\]
and
\[E_i\subseteq\bigcup_{d\in F(i,\delta)} B_\delta(d).\]
Define $h:\Z^+\times\Q\cap(0,1)\to(\{0,1\}^*)^{\hat{N}(E_i,\delta)}$ by
\[h(i,\delta)=\big(w_{i,\delta,1},\ldots,w_{i,\delta,\hat{N}(E_i,\delta)}\big),\]
where, recalling that $f$ is the function mapping bit strings onto the dense set $D$,
\[F(i,\delta)=\left(f\big(w_{i,\delta,1}\big),\ldots,f\big(w_{i,\delta,\hat{N}(E_i,\delta)}\big)\right).\]
Let $A$ be an oracle encoding $h$.

To prove~\eqref{eq:the32A}, let $x\in E$. It suffices to show that
\[\lim_{\delta\to 0^+} 2^{\C^A_\delta(x)}\varphi_s(\delta)=0.\]
For this, let $\varepsilon>0$. It suffices to show that, for all sufficiently small $\delta\in\Q^+$,
\begin{equation}\label{eq:the32C}
	\C^A_\delta(x)<\log\frac{\varepsilon}{\varphi_s(\delta)}.
\end{equation}

For each $\delta\in\Q\cap(0,1)$, let $r(\delta)=\left\lceil\log\frac{1}{\delta}\right\rceil$ and $\delta'=2^{-r(\delta)}$, so that $\frac{\delta}{2}<\delta'\leq\delta$. Since $s>t$, our hypothesis on $\varphi$ tells us that there is a constant $a>0$ such that, for all sufficiently small $\delta\in\Q^+$,
\begin{equation}\label{eq:the32D}
	\frac{1}{\varphi_t(\delta')}\leq\frac{a}{\varphi_s(2\delta')}\leq\frac{a}{\varphi_s(\delta)}.
\end{equation}
Since $t>u$,~\eqref{eq:the32B} tells us that, for all $i\in\N$,
\[\lim_{\delta\to 0^+}\hat{N}(E_i,\delta)\varphi_t(\delta)=0.\]
Hence, for all $i\in\N$ and all sufficiently small $\delta\in\Q^+$,
\begin{equation}\label{eq:the32E}
	\hat{N}(E_i,\delta)\varphi_t(\delta)<\frac{\varepsilon}{2a}.
\end{equation}
In particular, then,~\eqref{eq:the32D} and~\eqref{eq:the32E} tell us that, for all sufficiently small $\delta\in\Q^+$,
\begin{equation}\label{eq:the32F}
	\hat{N}(E_i,\delta')\leq\frac{\varepsilon}{2a\varphi_t(\delta')}\leq\frac{\varepsilon}{2\varphi_s(\delta)}.
\end{equation}

For each $i,k\in\Z^+$ and $\delta\in\Q\cap (0,1)$, let $\pi\in\{0,1\}^*$ be a string that encodes $i$, $r(\delta)$, and $k$, with
\[|\pi|=\log k + O(\log i +\log r(\delta)).\]

Let $M$ be an oracle Turing machine that, with oracle $A$ and program $\pi$, outputs the string $w_{i,\delta',k}$ that is the $k$\textsuperscript{th} component of $h(i,\delta')$ (if there is one), where $\delta'=2^{-r(\delta)}$. Let $c_M$ be an optimality constant for $M$.

To see that~\eqref{eq:the32C} holds, choose $i\in\Z^+$ such that $x\in E_i$, and let $\delta\in\Q\cap(0,1)$. Let $\delta'=2^{-r(\delta)}$, and choose $k\in\big\{1,\ldots,\hat{N}(E_i,\delta')\big\}$ such that $x\in B_{\delta'}\left(f\big(w_{i,\delta',k}\big)\right)$. Then
\[f\big(w_{i,\delta',k}\big)\in D\cap B_{\delta'}(x)\subseteq D\cap B_\delta(x),\]
so~\eqref{eq:the32F} gives, for all sufficiently small $\delta\in\Q^+$,
\begin{align*}
	\C^A_\delta(x)
	&\leq\C^A\big(w_{i,\delta',k}\big)\\
	&\leq\C_M^A\big(w_{i,\delta',k}\big)+c_M\\
	&\leq |\pi|+c_M\\
	&\leq \log k + c_M + O(\log i+\log r(\delta))\\
	&\leq \log\hat{N}(E_i,\delta')+O(\log i+\log r(\delta))\\
	&\leq \log\frac{\varepsilon}{2\varphi_s(\delta)}+O(\log i+\log r(\delta)).
\end{align*}
Since $i$ is a constant and, by our assumption, $\log r(\delta)\leq \log(\log(1/\delta)+1)=O(1/\varphi_t(\delta))=o(1/\varphi_s(\delta))$,
the second term vanishes as $\delta\to 0^+$, affirming~\eqref{eq:the32C}.

\item Let $s>t>\sup_{x\in E}\Dim^{\varphi,A}(x)<t<s$. Then for each $x\in E$ and all sufficiently small $\delta\in\Q^+$, $\C_\delta^{A}(x)< \log(1/\varphi_{t}(\delta))$.
For all $\delta\in\Q^+$, let
\[\mathcal{U}_\delta=\left\{B_{\delta}(f(w))\;\middle|\;\C^{A}(w)\le \log (1/\varphi_{t}(\delta))\right\},\]
and for each $i\in\N$, let
\[E_i=\left\{x\;\middle|\;\forall \delta<1/i,\ x\in \mathcal{U}_\delta\right\}.\]
Then $E\subseteq\bigcup_{i\in\N} E_i$.
For each $\delta<1/i$, $N(E_i, \delta)<2/\varphi_{t}(\delta)$,
so $N(E_i, \delta)\varphi_{s}(\delta)=o(1)$, and therefore $\Dimm^\varphi(E_i)\leq s$. Assuming that there is a precision family for $\varphi$, the result follows by Lemma~\ref{lem:tricot}.
\end{enumerate}
\end{proof}

\section{Hyperspace Dimension Theorems}\label{sec:hdts}
This section presents our main theorems.

As before, let $X=(X,\rho)$ be a separable metric space. The \emph{hyperspace} of $X$ is the metric space $\cK(X)= (\cK(X), \rhoH)$,
where $\cK(X)$ is the set of all nonempty compact subsets of $X$ and
$\rhoH$ is the {\em Hausdorff metric\/} \cite{Haus14}\ on $\cK(X)$
defined by
\[\rhoH(E,F) = \max\left\{\sup_{x\in E} \rho(x,F), \sup_{y\in F} \rho(E,y)\right\},\]
where $\rho(x, F)=\inf_{y\in F} \rho(x,y)$ and $\rho(E, y)=\inf_{x\in E}
\rho(x,y)$.

Let $f:\binary\to X$ and $D=\range(f)$ be fixed as at the
beginning of section \ref{sec:gad}, so that $D$ is dense in $X$. Let
$\cD$ be the set of all nonempty, finite subsets of $D$. It is well
known and easy to show that $\cD$ is a countable dense subset of
$\cK(X)$, and it is routine to define from $f$ a function
$\widetilde{f}: \binary \to \cK(X)$ such that $\range(\widetilde{f})=\cD$.
Thus $\cK(X)$ is a separable metric space, and the results in
section \ref{sec3}\ hold for $\cK(X)$.

It is important to note the distinction between the classical
Hausdorff and packing dimensions $\dimh(E)$ and $\dimpack(E)$ of a nonempty compact subset $E$ of
$X$ and the algorithmic dimensions $\dim(E)$ and $\Dim(E)$ of this same set when it
is regarded as a point in $\cK(X)$. In the appendix, we construct an example of a set $E$ with Hausdorff and packing dimensions $\dimh(E)=\dimpack(E)=\log(2)/\log(4)\approx 0.356$ and $\dim(E)=\Dim(E)=\infty$.

Our first hyperspace dimension theorem applies to lower and upper Minkowski dimensions. This theorem, which is proven using a counting argument, is very general, placing no restrictions on the gauge family $\varphi$ or the separable metric space $X$.

\begin{theorem}[hyperspace Minkowski dimension theorem]\label{thm:hypmink}
	For every gauge family $\varphi$ and every $E\subseteq X$,
	\[\dimm^{\widetilde{\varphi}}(\cK(E))= \dimm^\varphi(E)\quad\text{and}\quad\Dimm^{\widetilde{\varphi}}(\cK(E))= \Dimm^\varphi(E).\]
\end{theorem}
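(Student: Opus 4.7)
The plan is to prove the theorem by a direct counting argument that sandwiches $\log_2 N(\mathcal{K}(E),\delta)$ between two explicit functions of $N(E,\delta)$, and then translate the sandwich through the algebraic identity $N\cdot\widetilde{\varphi}_s(\delta)=2^{\log N-1/\varphi_s(\delta)}$ that is implicit in the definition of the jump.

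First I would establish the combinatorial sandwich
\[
N(E,2\delta)-1 \;\le\; \log_2 N(\mathcal{K}(E),\delta) \;\le\; N(E,\delta).
\]
For the upper inequality, fix an open $\delta$-cover $\{B_\delta(x_i)\}_{i=1}^{n}$ of $E$ with $n=N(E,\delta)$, and for each nonempty $S\subseteq[n]$ set $F_S=\{x_i:i\in S\}\in\mathcal{K}(X)$. For each $K\in\mathcal{K}(E)$ the nonempty index set $S_K=\{i:K\cap B_\delta(x_i)\ne\emptyset\}$ satisfies $\rhoH(K,F_{S_K})<\delta$ by a direct two-sided verification, so the $2^{n}-1$ open Hausdorff $\delta$-balls centered at the $F_S$ cover $\mathcal{K}(E)$. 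For the lower inequality, take a maximal $2\delta$-packing of $E$ of size $p=P(E,2\delta)\ge N(E,2\delta)$ (the standard packing--covering duality). The $2^{p}-1$ nonempty subsets of the packing are pairwise at Hausdorff distance $\ge 2\delta$, so no two fit in a single open Hausdorff $\delta$-ball, yielding $N(\mathcal{K}(E),\delta)\ge 2^{p}-1$.

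Next I would translate to dimensions. The identity $N(\mathcal{K}(E),\delta)\widetilde{\varphi}_s(\delta)=2^{\log_2 N(\mathcal{K}(E),\delta)-1/\varphi_s(\delta)}$ shows that $\liminf_{\delta\to 0^+}N(\mathcal{K}(E),\delta)\widetilde{\varphi}_s(\delta)=0$ is equivalent to $\log_2 N(\mathcal{K}(E),\delta_m)-1/\varphi_s(\delta_m)\to-\infty$ along some $\delta_m\to 0^+$. For the inequality $\dimm^{\widetilde{\varphi}}(\mathcal{K}(E))\le\dimm^{\varphi}(E)$, take $s>\dimm^{\varphi}(E)$ and choose $\delta_m$ with $N(E,\delta_m)\varphi_s(\delta_m)\to 0$; the upper combinatorial bound gives $\log_2 N(\mathcal{K}(E),\delta_m)-1/\varphi_s(\delta_m)\le N(E,\delta_m)-1/\varphi_s(\delta_m)\to-\infty$. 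For the reverse inequality, take $s>\dimm^{\widetilde{\varphi}}(\mathcal{K}(E))$ and $\delta_m$ with $\log_2 N(\mathcal{K}(E),\delta_m)-1/\varphi_s(\delta_m)\to-\infty$; the lower combinatorial bound then forces $N(E,2\delta_m)<1/\varphi_s(\delta_m)$ with gap tending to infinity, and setting $\eta_m=2\delta_m$ and using $\varphi_{s'}=o(\varphi_s)$ for $s'>s$ yields $N(E,\eta_m)\varphi_{s'}(\eta_m)\to 0$, so $\dimm^{\varphi}(E)\le s$. The $\Dimm$ case is identical with $\limsup$ replacing $\liminf$; because both combinatorial bounds hold pointwise at every small $\delta$, no adjustment is needed.

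The main obstacle is the factor-of-$2$ shift in $\delta$ appearing on the lower combinatorial side. Passing from $N(E,\eta_m)\varphi_s(\eta_m/2)\to 0$ to $N(E,\eta_m)\varphi_{s'}(\eta_m)\to 0$ requires controlling the ratio $\varphi_{s'}(\eta)/\varphi_s(\eta/2)$ as $\eta\to 0^+$ for some $s'>s$. For the canonical family $\theta_s(\delta)=\delta^s$ this ratio is exactly $2^s\eta^{s'-s}\to 0$, and in general one exploits the strict dominance $\varphi_{s'}=o(\varphi_{s})$ for $s'>s$ built into the definition of a gauge family, together with the monotonicity of each $\varphi_s$, to absorb the factor; this is the step I would spend the most time making rigorous. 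Everything else is bookkeeping around the two combinatorial bounds, which are the real content of the theorem.
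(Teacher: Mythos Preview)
Your approach is the paper's: the same two-sided counting sandwich on $\log_2 N(\cK(E),\delta)$ together with the jump identity $N\cdot\widetilde{\varphi}_s(\delta)=2^{\log_2 N-1/\varphi_s(\delta)}$. The upper combinatorial bound and the inequality $\dimm^{\widetilde{\varphi}}(\cK(E))\le\dimm^{\varphi}(E)$ are carried out identically in both.

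You are right to single out the factor-of-$2$ shift as the obstacle, but the fix you sketch does not work. From $N(E,\eta_m)-1/\varphi_s(\eta_m/2)\to-\infty$ you want $N(E,\eta_m)\varphi_{s'}(\eta_m)\to 0$ for some $s'>s$, which requires $\varphi_{s'}(\eta)/\varphi_s(\eta/2)\to 0$. Monotonicity only gives $\varphi_s(\eta/2)\le\varphi_s(\eta)$, hence $\varphi_{s'}(\eta)/\varphi_s(\eta/2)\ge\varphi_{s'}(\eta)/\varphi_s(\eta)$; the gauge-family hypothesis $\varphi_{s'}=o(\varphi_s)$ controls the \emph{right}-hand side, which is the wrong direction. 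Concretely, for the legitimate gauge family $\varphi_s(\delta)=e^{-s/\delta}$ one has $\varphi_{s'}(\eta)/\varphi_s(\eta/2)=\exp\bigl((2s-s')/\eta\bigr)\to\infty$ for every $s'\in(s,2s)$, so no choice of $s'>s$ rescues the step without a doubling hypothesis such as $\varphi_{s'}(2\delta)=O(\varphi_s(\delta))$. The paper's proof of this direction instead writes the unshifted lower bound $N(\cK(E),\delta)\ge 2^{N(E,\delta)}$, deduced from a $2\delta$-separated set $P\subseteq E$ via ``$|2^P|\ge 2^{N(E,\delta)}$''; but the standard packing--covering relations give only $N(E,2\delta)\le|P|\le N(E,\delta)$ (the left inequality for $P$ maximal), so that line conceals precisely the scale shift you isolated.
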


Our third main result is the surprising fact that in a hyperspace, packing dimension and upper Minkowski dimension are equivalent for compact sets.

\begin{theorem}\label{thm:packminkequiv}
	For every separable metric space $X$, every compact set $E\subseteq X$, and every gauge family $\varphi$ such that $\varphi_t(2\delta)=O(\varphi_s(\delta))$ and $\varphi_s(\delta)=O(1/\log\log(1/\delta))$ as $\delta\to 0^+$ for all $s<t$ and there is a precision family for $\varphi$,
	\[\dimpack^{\widetilde{\varphi}}(\cK(E))=\Dimm^{\widetilde{\varphi}}(\cK(E)).\]
\end{theorem}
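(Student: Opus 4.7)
The plan is to establish the two inequalities separately. For $\dimpack^{\widetilde{\varphi}}(\cK(E))\leq\Dimm^{\widetilde{\varphi}}(\cK(E))$, I will simply apply Lemma~\ref{lem:tricot}(2) to the trivial cover of $\cK(E)$ by itself, after first checking that $\widetilde{\varphi}$ inherits a precision family from $\varphi$, a routine computation using $\widetilde{\varphi}_s(\delta)=2^{-1/\varphi_s(\delta)}$. The reverse inequality is the substantive direction. By Lemma~\ref{lem:tricot}(1), it will be enough to show that every countable cover $\cK(E)=\bigcup_i\mathcal{F}_i$ admits some $\mathcal{F}_i$ with $\Dimm^{\widetilde{\varphi}}(\mathcal{F}_i)\geq\Dimm^{\widetilde{\varphi}}(\cK(E))$. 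Replacing each $\mathcal{F}_i$ by its closure preserves Minkowski dimension and makes each $\mathcal{F}_i$ compact, so I may start from that assumption.

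The overall strategy will combine the Baire Category Theorem on the compact space $\cK(E)$ with the subset-counting construction driving the hyperspace Minkowski dimension theorem (Theorem~\ref{thm:hypmink}). Given any Hausdorff-ball $B_r(F_0)\cap\cK(E)$ contained in some $\mathcal{F}_{i_0}$, I will exhibit $2^{|T|}$ pairwise $\delta$-separated elements of $\mathcal{F}_{i_0}$ of the form $F_0\cup S$, where $S$ ranges over subsets of a $\delta$-separated set $T\subseteq E$ of points lying within distance $r'<r$ of $F_0$ and at distance at least $\delta$ from $F_0$. Plugging $\widetilde{\varphi}_s(\delta)=2^{-1/\varphi_s(\delta)}$ into the resulting packing bound will then yield $\Dimm^{\widetilde{\varphi}}(\mathcal{F}_{i_0})\geq\Dimm^{\varphi}(\{x\in E:\rho(x,F_0)<r\})$.

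The hard part, and the main obstacle, will be that this lower bound can strictly undershoot $\Dimm^{\varphi}(E)$: the Baire-given $F_0$ might sit in a low-dimensional region of $E$, with an isolated point of $\cK(E)$ as the extreme case. To close this gap I will first produce, using the compactness of $E$ and the finite stability of $\Dimm^{\varphi}$, a ``full-density'' point $y^\ast\in E$ satisfying $\Dimm^{\varphi}(B_\rho(y^\ast)\cap E)=\Dimm^{\varphi}(E)$ for every $\rho>0$; such $y^\ast$ arises as a limit point of centers of finite $1/n$-covers of $E$ whose associated pieces attain $\Dimm^{\varphi}(E)$. A short computation gives $B_\rho(\{y^\ast\})\cap\cK(E)=\cK(B_\rho(y^\ast)\cap E)$, and Theorem~\ref{thm:hypmink} then guarantees that every neighborhood of $\{y^\ast\}$ in $\cK(E)$ already has Minkowski dimension $\Dimm^{\varphi}(E)$.

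I will then apply the Baire Category Theorem inside each Polish open set $B_{1/n}(\{y^\ast\})\cap\cK(E)$ to obtain indices $i_n$ and open sub-balls $B_{\rho_n}(G_n)\cap B_{1/n}(\{y^\ast\})\cap\cK(E)\subseteq\mathcal{F}_{i_n}$ with $G_n\to\{y^\ast\}$. A pigeonhole on the countable cover will isolate a single index $i^\ast$ that handles infinitely many scales $n$, and replacing each $G_n$ by a suitable finite subnet of itself will let me re-run the subset-counting construction in these nested sub-balls. This will deliver packings of size $2^{|T_\delta|}$ in $\mathcal{F}_{i^\ast}$ with $|T_\delta|$ comparable to the packing number of $B_{1/n}(y^\ast)\cap E$ at scale $\delta$, and the precision family hypothesis on $\varphi$ then converts these packings into $\Dimm^{\widetilde{\varphi}}(\mathcal{F}_{i^\ast})\geq\Dimm^{\varphi}(E)=\Dimm^{\widetilde{\varphi}}(\cK(E))$, completing the proof.
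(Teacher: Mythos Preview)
Your approach is a purely classical Baire-category argument, whereas the paper proves the substantive direction via the point-to-set principle (Theorem~\ref{the32}): it fixes the full-density point $x$ of Lemma~\ref{lem:strongmdimball} and then \emph{recursively constructs} a single compact set $L=\bigcup_i L_i'\in\cK(E)$, choosing each $L_i'\subseteq B_{\delta_{i-1}/2}(x)\cap E$ to have $\K^{A_i}_{\delta_i}(L_i'\mid\delta_i)\varphi_s(\delta_i)\geq 1/4$ relative to a suitably augmented oracle. The full-density of $x$ is invoked afresh at every stage to guarantee that $\cK(B_{\delta_{i-1}/2}(x)\cap E)$ still has $\Dimm^{\widetilde{\varphi}}=t$, so that such an $L_i'$ exists; Theorem~\ref{thm:DimL} then converts the complexity lower bounds into $\Dim^{\widetilde{\varphi},A}(L)\geq t$.

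Your plan has a real gap at the step where you assert that $|T_\delta|$ is ``comparable to the packing number of $B_{1/n}(y^*)\cap E$ at scale $\delta$.'' When Baire hands you $B_{\rho_n}(G_n)\cap B_{1/n}(\{y^*\})\cap\cK(E)\subseteq\mathcal{F}_{i_n}$, you have no control whatsoever over $\rho_n$: it may be far smaller than $\rhoH(G_n,\{y^*\})$, so that the $\rho_n$-neighbourhood of $G_n$ in $E$ contains no ball about $y^*$. Your subset-counting sets $G_n\cup S$ force $S$ to lie in that $\rho_n$-neighbourhood, so $|T_\delta|$ is bounded only by the packing number of $\{x\in E:\rho(x,G_n)<\rho_n\}$ at scale $\delta$, not of $B_{1/n}(y^*)\cap E$. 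Since the full-density property is guaranteed only at $y^*$, this smaller region can have $\Dimm^\varphi$ strictly below $t$, and the pigeonhole over $n$ does not help: every one of the infinitely many Baire balls could land in such a low-dimensional pocket. A repair is available---run Baire in the compact subspace $\{K\in\cK(E):y^*\in K\}$, so the center of the Baire ball already contains $y^*$ and the subset-counting region automatically contains a fixed $B_{\rho'}(y^*)\cap E$---but that is not what you wrote, and even this fix needs care when removing the $\delta$-neighbourhood of the center in a metric space with no doubling assumption on $\rho$.
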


The point-to-set principle is central to our proof of this theorem: We recursively construct a single compact set $L\subseteq E$ (i.e., a single point in the hyperspace $\cK(E)$) that has high Kolmogorov complexity at infinitely many precisions, relative to an appropriate oracle $A$. We then invoke Theorem~\ref{thm:DimL} to show that this $L$ has high $\varphi$-gauged strong algorithmic dimension relative to $A$. By the point-to-set principle, then, $\cK(E)$ has high packing dimension.

\begin{observation}
	The conclusion of Theorem~\ref{thm:packminkequiv} does not hold for arbitrary sets $E$.
\end{observation}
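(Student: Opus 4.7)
The plan is to exhibit a non-compact set $E \subseteq \R$ witnessing the strict inequality $\dimpack^{\widetilde{\theta}}(\cK(E)) < \Dimm^{\widetilde{\theta}}(\cK(E))$, where $\theta$ is the canonical gauge family. Note that $\theta$ satisfies both hypotheses of Theorem~\ref{thm:packminkequiv} (the doubling condition $\theta_t(2\delta) = O(\theta_s(\delta))$ is immediate for $s < t$, and $\alpha_s(r) = 2^{-sr}$ is a precision family), so a counterexample at $\theta$ demonstrates that compactness of $E$ cannot be dropped from that theorem.

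Take $E = \{1/n : n \geq 1\} \subseteq \R$. The first step is to observe that every compact subset of $E$ is finite: any infinite subset of $E$ accumulates at $0$, but $0 \notin E$, so no infinite subset of $E$ is closed in $\R$. Hence $\cK(E)$ is exactly the set of nonempty finite subsets of $E$, and in particular it is countable.

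Next, I would use countable stability of gauged packing dimension to conclude $\dimpack^{\widetilde{\theta}}(\cK(E)) = 0$. For any gauge family $\psi$ and any $K \in \cK(X)$, a collection of disjoint open balls with centers in the singleton $\{K\}$ contains at most one ball, so $P^{\psi_s}_\delta(\{K\}) \leq \psi_s(\delta) \to 0$ as $\delta \to 0^+$; thus $P^{\psi_s}(\{K\}) = 0$, and countable subadditivity of the packing outer measure gives $P^{\widetilde{\theta}_s}(\cK(E)) = 0$ for every $s > 0$. On the other hand, Theorem~\ref{thm:hypmink} gives $\Dimm^{\widetilde{\theta}}(\cK(E)) = \Dimm^\theta(E)$, and the classical count $N(E,\delta) \asymp \delta^{-1/2}$ yields $\Dimm^\theta(E) = 1/2$. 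Combining,
\[\dimpack^{\widetilde{\theta}}(\cK(E)) = 0 < \tfrac{1}{2} = \Dimm^{\widetilde{\theta}}(\cK(E)),\]
falsifying the conclusion of Theorem~\ref{thm:packminkequiv} for this non-compact $E$.

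There is no substantive obstacle; every ingredient is either a standard fact about $\R$ or is already established in the paper. The conceptual content --- the reason compactness was essential in Theorem~\ref{thm:packminkequiv} --- is that non-closedness of $E$ can collapse $\cK(E)$ to a countable set (forcing packing dimension to $0$ by countable stability) even while $E$'s own upper Minkowski richness is passed through to $\cK(E)$ by the hyperspace Minkowski dimension theorem.
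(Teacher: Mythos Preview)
Your proposal is correct and follows essentially the same approach as the paper: the paper also takes $E=\{1/n:n\in\N\}$, notes that every compact subset of $E$ is finite so $\cK(E)$ is countable with $\dimpack^{\widetilde{\theta}}(\cK(E))=0$, and contrasts this with $\Dimm^\theta(E)=1/2$. You have simply supplied more of the routine details (verifying the hypotheses on $\theta$, invoking Theorem~\ref{thm:hypmink} explicitly, and spelling out countable stability of packing dimension).
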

\begin{proof}
	Let $E=\{1/n:n\in\N\}$. Then $\Dimm^\theta(E)=1/2$, but every compact subset of $E$ is finite, so $\cK(E)$ is countable and $\dimpack^{\widetilde{\theta}}(\cK(E))=0$.
\end{proof}

\begin{theorem}[hyperspace packing dimension theorem]\label{thm:hpdt}
If $X$ is a separable metric space, $E\subseteq X$ is an analytic set, and $\varphi$ is a gauge family such that $\varphi_s(2\delta)=O(\varphi_s(\delta))$ and $\varphi_s(\delta)=O(1/\log\log(1/\delta))$ as $\delta\to 0^+$ for all $s\in(0,\infty)$ and there is a precision family for $\varphi$, then
\[\dimpack^{\widetilde{\varphi}}(\cK(E))\geq\dimpack^\varphi(E).\]
\end{theorem}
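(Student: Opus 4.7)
The plan is to reduce, via the general point-to-set principle for packing dimension on $\cK(E)$, to finding a compact subset $K$ of $E$ with $\Dimm^\varphi(K)$ close to $\dimpack^\varphi(E)$, and then bridge the result on $K$ to the full hyperspace using Theorems~\ref{thm:hypmink} and~\ref{thm:packminkequiv}.

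First I would verify that $\varphi_s(2\delta)=O(\varphi_s(\delta))$, together with $\varphi_t=o(\varphi_s)$ for $s<t$ built into the definition of a gauge family, yields $\widetilde{\varphi}_t(2\delta)=O(\widetilde{\varphi}_s(\delta))$ for $s<t$; hence Theorem~\ref{the32}(1) applies in $\cK(E)$ with gauge family $\widetilde{\varphi}$, giving
\[\dimpack^{\widetilde{\varphi}}(\cK(E))\geq\adjustlimits\min_{A\subseteq\N}\sup_{L\in\cK(E)}\Dim^{\widetilde{\varphi},A}(L).\]
The argument then reduces to the following extraction claim: for each $s<\dimpack^\varphi(E)$, there exists a compact $K\subseteq E$ with $\Dimm^\varphi(K)>s$. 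From such a $K$, Theorem~\ref{thm:hypmink} gives $\Dimm^{\widetilde{\varphi}}(\cK(K))=\Dimm^\varphi(K)>s$; Theorem~\ref{thm:packminkequiv} (whose hypotheses on $\varphi$ are exactly those of the present theorem) gives $\dimpack^{\widetilde{\varphi}}(\cK(K))=\Dimm^{\widetilde{\varphi}}(\cK(K))>s$; and monotonicity of $\dimpack^{\widetilde{\varphi}}$ along $\cK(K)\subseteq\cK(E)$ yields $\dimpack^{\widetilde{\varphi}}(\cK(E))>s$. Letting $s\uparrow\dimpack^\varphi(E)$ completes the proof.

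The main obstacle is proving the extraction claim for analytic $E$; this is the ``additional machinery'' mentioned in the introduction, a gauged analogue of the classical Joyce-Preiss inner regularity theorem. I would prove it by exploiting the point-to-set principle itself. Fix an oracle $B$ coding a $B$-computable continuous surjection $\pi:\N^\omega\to E$, which exists by the analyticity of $E$; by Theorem~\ref{the32}(2) applied to $E$ relative to $B$, one gets $\sup_{x\in E}\Dim^{\varphi,B}(x)\geq\dimpack^\varphi(E)>s$. Then I would recursively construct a finitely-branching subtree $T\subseteq\N^{<\omega}$ such that, at a sequence of scales $\delta_n\to 0$, $\pi$ maps at least $\gtrsim 1/\varphi_s(\delta_n)$ branches of $T$ to pairwise $\delta_n$-separated points of $E$; the complexity bound above (relative to suitable oracle enlargements of $B$) guarantees enough room for these selections at each stage. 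The compact image $K=\pi([T])\subseteq E$ then has $N(K,\delta_n)\gtrsim 1/\varphi_s(\delta_n)$ infinitely often, yielding $\Dimm^\varphi(K)\geq s$. The delicate combinatorial core is maintaining finite branching (so that $[T]$, and hence $K$, is compact) while still realizing enough high-complexity, $\delta_n$-separated points inside $E$ at each chosen scale.
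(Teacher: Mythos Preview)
Your overall reduction is exactly the paper's: find a compact $K\subseteq E$ with $\Dimm^\varphi(K)$ close to $\dimpack^\varphi(E)$, apply Theorems~\ref{thm:hypmink} and~\ref{thm:packminkequiv} to $K$, and conclude by monotonicity of packing dimension under $\cK(K)\subseteq\cK(E)$. (Your opening invocation of the point-to-set principle on $\cK(E)$ is never actually used in this chain and can be dropped.) The paper handles the extraction step by \emph{citing} the Joyce--Preiss theorem directly: an analytic set of positive $\varphi_s$-packing measure contains a compact subset of positive finite $\varphi_s$-packing measure, whence for every $\varepsilon>0$ there is a compact $E_\varepsilon\subseteq E$ with $\dimpack^\varphi(E_\varepsilon)\geq\dimpack^\varphi(E)-\varepsilon$, and hence $\Dimm^\varphi(E_\varepsilon)\geq\dimpack^\varphi(E)-\varepsilon$.

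Your attempt to \emph{reprove} the extraction claim via the point-to-set principle has a genuine gap. From $\sup_{x\in E}\Dim^{\varphi,B}(x)>s$ you obtain a single point $x\in E$ that is hard to approximate relative to $B$: for infinitely many $\delta$, no $B$-program of length at most $\log(1/\varphi_s(\delta))$ outputs a $\delta$-approximation to $x$. This is a lower bound on the description length of \emph{one} point, not a lower bound on $N(E,\delta)$ or on the number of $\delta$-separated branches of any Suslin tree for $E$; it says nothing about the existence of many $\delta_n$-separated points in $E$. The Joyce--Preiss construction genuinely needs the packing-premeasure structure---a pigeonhole argument on the Suslin scheme that, at each stage, selects a large family of separated pieces each still carrying large premeasure---and a single high-complexity point cannot substitute for this. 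Either cite Joyce--Preiss as the paper does, or supply the actual measure-theoretic tree construction; the point-to-set shortcut you sketch does not close the gap.
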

\begin{proof}
	For compact sets $E$, Theorem~\ref{thm:packminkequiv} and the hyperspace Minkowski dimension theorem (Theorem~\ref{thm:hypmink}) imply $\dimpack^{\widetilde{\varphi}}(\cK(E))=\Dimm^\varphi(E)$.

	A result of Joyce and Preiss (Corollary 1 in~\cite{JoyPre95}) states that every analytic set with positive (possibly infinite) gauged packing measure contains a compact subset with positive (finite) packing measure in the same gauge. It follows that if $E$ is analytic, then for all $\varepsilon>0$ there exists a compact subset $E_\varepsilon\subseteq E$ with $\dimpack^\varphi(E_\varepsilon)\geq \dimpack^\varphi(E)-\varepsilon$.
	Therefore
	\begin{align*}
		\dimpack^{\widetilde{\varphi}}(\cK(E_\varepsilon))&=\Dimm^\varphi(E_\varepsilon)\\
		&\geq \dimpack^\varphi(E_\varepsilon)\\
		&\geq \dimpack^\varphi(E)-\varepsilon.
	\end{align*}
	Letting $\varepsilon\to 0$ completes the proof.
\end{proof}

\section{Conclusion}

Our results exhibit and amplify the power of the theory of computing to make unexpected contributions to other areas of the mathematical sciences. We hope and expect to see more such results in the near future.

We mention three open problems whose solutions may contribute to such progress. First, at the time of this writing, a hyperspace Hausdorff dimension theorem remains an open problem. The difficulty in adapting our approach to that problem is that in the proof of Theorem~\ref{thm:packminkequiv}, the set $L$ we construct is only guaranteed to have high complexity at infinitely many precisions. An analogous proof for Hausdorff dimension would require constructing a set $L$ that has high complexity at all but finitely many precisions.

Second, it would be useful to identify classes of spaces in which Billingsley-type algorithmic dimensions---dimensions shaped by probability measures---can be formulated.

Finally, we do not at this time know how to characterize algorithmic dimensions in separable metric spaces in terms of martingales or more general gales. This is despite the fact that algorithmic dimensions were first formulated in these terms in sequence spaces.

\subsection*{Acknowledgments}

We thank anonymous reviewers of an early draft of this paper for several observations that have improved this paper.

\newpage

\bibliography{hdt}

\newpage

\appendix

\section{Proofs from Section 2}
\begin{observation}\label{obs:separableN}
	If $E$ is separable and $0<\delta<\hat{\delta}$, then $\hat{N}(E,\hat{\delta})\leq N(E,\delta)$.
\end{observation}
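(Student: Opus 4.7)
The proof is a direct density argument, so I would first dispose of the trivial case $N(E,\delta)=\infty$, in which the inequality holds vacuously. Assuming $k := N(E,\delta) < \infty$, I note that the minimum is attained since the achievable cardinalities form a subset of $\N$, and fix an optimal cover $E \subseteq \bigcup_{i=1}^{k} B_\delta(x_i)$ with centers $x_1,\dots,x_k \in X$.

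To replace these centers by points of the dense set $D$, I would set $\varepsilon = \hat\delta - \delta > 0$ and, using density of $D$ in $X$, choose $d_i \in D$ with $\rho(x_i, d_i) < \varepsilon$ for each $i \leq k$. For any $y \in B_\delta(x_i)$, the triangle inequality gives
\[
\rho(y, d_i) \leq \rho(y, x_i) + \rho(x_i, d_i) < \delta + \varepsilon = \hat\delta,
\]
so $B_\delta(x_i) \subseteq B_{\hat\delta}(d_i)$. Taking the union over $i$ yields a cover $E \subseteq \bigcup_{i=1}^{k} B_{\hat\delta}(d_i)$ by $\hat\delta$-balls with centers in $D$, hence $\hat N(E,\hat\delta) \leq k = N(E,\delta)$.

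There is no genuine obstacle here; the argument's success depends only on the strict inequality $\delta < \hat\delta$, which creates the slack $\varepsilon = \hat\delta - \delta$ needed to absorb the density approximation error. Were the hypothesis weakened to $\delta \leq \hat\delta$, the argument would break precisely at the step where $B_\delta(x_i) \subseteq B_{\hat\delta}(d_i)$ is established, since a density approximation by $D$ cannot generally be made arbitrarily close while keeping $d_i \in D$.
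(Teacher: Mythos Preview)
Your proof is correct and follows essentially the same approach as the paper: choose an optimal $\delta$-cover, use density of $D$ to perturb each center by less than $\hat\delta-\delta$, and apply the triangle inequality to obtain the containment $B_\delta(x_i)\subseteq B_{\hat\delta}(d_i)$. The only differences are cosmetic---you explicitly handle the case $N(E,\delta)=\infty$ and add a closing remark on the role of the strict inequality.
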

\begin{proof}
	Let $F\subseteq X$ be a witness to $N(E,\delta)$. For all $x\in F$, there exists $\hat{x}\in D$ with $0<\rho(x,\hat{x})<\hat{\delta}-\delta$, so $B_{\delta}(x)\subseteq B_{\hat{\delta}}(\hat{x})$. Thus, the set $\hat{F}=\myset{\hat{x}}{x\in F}\subseteq D$ satisfies
	\[E\subseteq\bigcup_{x\in F}B_{\delta}(x)\subseteq\bigcup_{\hat{x}\in \hat{F}}B_{\hat{\delta}}(\hat{x}),\]
	so $\hat{N}(E,\hat{\delta})\leq|\hat{F}|=|F|=N(E,\delta)$.
\end{proof}

\begin{proof}[Proof of Observation~\ref{obs:separabledim}]
	Since $\hat{N}(E,\delta)\geq N(E,\delta)$, it is clear that the left-hand side of each equation is bounded above by the right-hand side. We now prove the other direction. Fix $s\in(0,\infty)$.
	\begin{enumerate}
		\item Assume that $\liminf_{\delta\to 0^+} N(E,\delta)\varphi_s(\delta)=0$, and let		$\delta_0,\varepsilon>0$. Then there is some $\delta<\delta_0$ such that $N(E,\delta)\varphi_s(\delta)<\varepsilon/2$. By the (right) continuity of $\varphi_s$, there exists some $\hat{\delta}\in (\delta,\delta_0)$ such that $\varphi_s(\hat{\delta})<2\varphi_s(\delta)$. By Observation~\ref{obs:separableN}, $\hat{N}(E,\hat{\delta})\leq N(E,\delta)$, so we have $\hat{N}(E,\hat{\delta})\varphi_s(\hat{\delta})<2N(E,\delta)\varphi_s(\delta)<\varepsilon$.
		\item Assume that $\limsup_{\delta\to 0^+} N(E,\delta)\varphi_s(\delta)=0$, and let $\varepsilon>0$. Then there is some $\delta_0>0$ such that $N(E,\delta)\varphi_s(\delta)<\varepsilon/2$ for all $\delta<\delta_0$. For every $\hat{\delta}<\delta_0$, by the (left) continuity of $\varphi_s$, there is some $\delta<\hat{\delta}$ such that $\varphi_s(\hat{\delta})<2\varphi_s(\delta)$. Observation~\ref{obs:separabledim} tells us that $\hat{N}(E,\hat{\delta})\leq N(E,\delta)$, so we have $\hat{N}(E,\hat{\delta})\varphi_s(\hat{\delta})<2N(E,\delta)\varphi_s(\delta)<\varepsilon$.
	\end{enumerate}
\end{proof}

\begin{proof}[Proof of Lemma~\ref{lem:tricot}]
We follow the presentation by Bishop and Peres~\cite{BisPer16} of the proof for the canonical gauge family.
\begin{enumerate}
\item Fix $t>s>0$, and assume there is some constant $c>0$ such that $\varphi_t(2\delta)<c\cdot\varphi_s(\delta)$ for all sufficiently small $\delta$. Let $F\subseteq E$ be a set with $P_0^{\varphi_s}(F)<\infty$. It suffices to show that
$\limsup_{\delta\to 0^+} N(F,\delta)\varphi_t(\delta)<\infty$. To see this, fix $\delta>0$ and let $N_p(F,\delta)$ denote the maximum number of disjoint open balls of diameter $\delta$ with centers in $F$, and observe that $N_p(F,\delta)\varphi_s(\delta)\leq P_\delta^{\varphi_s}(F)$ and $N(F,2\delta)\leq N_p(F,\delta)$. It follows that $N(F,2\delta)\varphi_t(2\delta)/c<\infty$ for all sufficiently small $\delta$, which yields the desired bound. 

\item Suppose that $\alpha=\{\alpha_s\}_{s\in(0,\infty)}$ is a precision family for $\varphi$, fix $t>s>0$, and suppose that $F\subseteq E$ is a set with $P_0^{\varphi_t}(F)>0$. It suffices to show that $\limsup_{\delta\to0^+}N(F,\delta)\varphi_s(\delta)>0$, and since $N_p(F,\delta)\leq N(F,\delta)$, it suffices to show that $\limsup_{\delta\to0^+}N_p(F,\delta)\varphi_s(\delta)>0$.

Let $\gamma>0$ be such that for every $\varepsilon>0$ there is a collection $\{B_{\delta_j/2}(x_j)\}_{j\in\mathbb{N}}$ of disjoint balls with diameters $\delta_j\leq\varepsilon$, centers $x_j\in F$, and $\sum_{j\in\N}\varphi_t(\delta_j)>\gamma$. Fix an $\varepsilon>0$ and a corresponding collection of balls. Let $r_0=\max\{r\in\N\mid\alpha_s(r)>\varepsilon\}$,
and for each $r\in\mathbb{N}$, let $n_r=|\{j\in\N\mid \alpha_s(r+1)\leq \delta_j<\alpha_s(r)\}|$. Then $N_p(F,\alpha_s(r+1))\geq n_r$, so
\begin{align*}
	\sum_{r=r_0}^\infty N_p(F,\alpha_s(r+1))\varphi_t(\alpha_s(r))
	&\geq\sum_{r=r_0}^\infty n_r\varphi_t(\alpha_s(r))\\
	&> \sum_{j\in\N}\varphi_t(\delta_j)\\
	&>\gamma.
\end{align*}
For every $r\in\N$, define
\[a_r=N_p(F,\alpha_s(r+1))\varphi_{s}(\alpha_s(r+1))\geq N_p(F,\alpha_s(r))\frac{\varphi_{s}(\alpha_s(r))}{c},\]
where $c$ is the implicit constant in the precision sequence condition for $\alpha_s$. Then
\begin{align*}
	\sum_{r=r_0}^\infty a_r\frac{\varphi_{t}(\alpha_s(r))}{\varphi_{s}(\alpha_s(r))}\geq \frac{1}{c}\sum_{r= r_0}^\infty N_p(F,\alpha_s(r+1)){\varphi_{t}(\alpha_s(r+1))}>\frac{\gamma}{c}.
\end{align*}

If $\{a_r\}_{r\in\N}$ were bounded, then the sum would tend to 0 as $r_0\to\infty$ since $\alpha$ is a precision family. This is a contradiction, so we have $\limsup_{r\to\infty}N_p(F,\alpha_s(r+1))\varphi_s(\alpha_s(r+1))>0$, and the claim holds.

\end{enumerate}
\end{proof}

\section{Proofs from Section 3}
\begin{proof}[Proof of Lemma~\ref{lemm24}]
Letting $x=\frac{\varphi_s(\delta)}{\ln 2}$ and noting that $x\to 0^{+}$
as $\delta\to 0^{+}$, we have
\[\widetilde{\varphi}_s(\delta)= 2^{-1/\varphi_s(\delta)}= e^{-1/x}=o(x) = o(
\varphi_s(\delta))\] as $\delta\to 0^{+}$.
\end{proof}

\begin{proof}[Proof of Theorem~\ref{thm:DimL}]
	Let $\varphi$ and $x$ be as given, and let $S^-$ and $S^+$ be the sets on the right-hand sides of 1 and 2, respectively.
	\begin{enumerate}
		\item It suffices to show that
		\[\big(\dim^{\widetilde{\varphi}},\infty\big)\subseteq S^-\subseteq \big[\dim^{\widetilde{\varphi}},\infty\big).\]
		To verify the first inclusion, note that, by Observation~\ref{obs:36},
		\begin{align*}
			t>s>\dim^{\widetilde{\varphi}}(x)
			&\implies \liminf_{\delta\to 0^+}2^{\C_\delta(x)}\widetilde{\varphi}_s(\delta)=0\\
			&\iff \liminf_{\delta\to 0^+}\frac{\C_\delta(x)\varphi_s(\delta)-1}{\varphi_s(\delta)}=-\infty\\
			&\implies\liminf_{\delta\to 0^+}\C_\delta(x)\varphi_s(\delta)<1\\
			&\implies t\in S^-.
		\end{align*}
		To verify the second inclusion, note that, by Observation~\ref{obs:36},
		\begin{align*}
			s\in S^-&\iff \liminf_{\delta\to 0^+}\C_\delta(x)\varphi_s(\delta)=0\\
			&\implies\liminf_{\delta\to 0^+}\frac{\C_\delta(x)\varphi_s(\delta)-1}{\varphi_s(\delta)}=-\infty\\
			&\iff \liminf_{\delta\to 0^+}\log\big(2^{\C_\delta(x)}\widetilde{\varphi}_s(\delta)\big)=-\infty\\
			&\iff \liminf_{\delta\to 0^+}2^{\C_\delta(x)}\widetilde{\varphi}_s(\delta)=0\\
			&\implies s\geq \dim^{\widetilde{\varphi}}(x).
		\end{align*}
		\item It suffices to show that
		\[\big(\Dim^{\widetilde{\varphi}},\infty\big)\subseteq S^+\subseteq \big[\Dim^{\widetilde{\varphi}},\infty\big).\]
		The proof of this is completely analogous to the proof of part 1 of the theorem.
	\end{enumerate}
\end{proof}

\section{Proofs from Section 4}
\begin{proof}[Proof of Theorem~\ref{theo31}]
	Let $X$, $\varphi$, and $E$ be as given. A function $f:\{0,1\}^*\to X$ such that $D=\range(f)$ is dense in $X$ is an implicit oracle in all Kolmogorov complexities and algorithmic dimensions in this proof, but we omit $f$ from the notation.

	For any $s\in\Q^+$, the density of $D$ implies that $H^{\varphi_{s}}(E)=0$ can be witnessed by balls with rational radii and centers in $D$. Hence, for every $s,\varepsilon\in\Q^+$ with $s>\dimh^\varphi(E)$, there exist sequences $\{x^{s,\varepsilon}_i\}_{i\in\N}\subseteq D$ and
	$\{\delta^{s,\varepsilon}_i\}_{i\in\N}\subseteq\Q^+$ such that $\{B_{\delta^{s,\varepsilon}_i}(x^{s,\varepsilon}_i)\}_{i\in\N}$
	is an $\varepsilon$-cover of $E$ and
	\begin{equation}\label{eq:coversumbound}
		\sum_{i\in\N} \varphi_{s}(\delta^{s,\varepsilon}_i)<1.
	\end{equation}

	Let $h:\N\times(\Q\cap(\dimh^\varphi(E),\infty))\times\Q^+\to \binary\times\Q$ be such that
	\[h(i,s,\varepsilon)=(w_i^{s,\varepsilon}, \delta_i^{s,\varepsilon}),\]
	where $f(w_i^{s,\varepsilon})=x_i^{s,\varepsilon}$, and let $g:(\Q^+)^3\to\R$ be the (continuous) function
	\[g(s,t,\varepsilon)=\inf_{\delta\leq \varepsilon}\log\sqrt[4]{\frac{\varphi_s(\delta)}{\varphi_{t}(\delta)}}.\]
	Let $A$ be an oracle encoding $h$ and $g$, and let $s,t\in\Q^+$ such that $\dimh^\varphi(E)<s<t$. We will show that for every $x\in E$, $\dim^{\varphi,A}(x)\le t$.

Fix $x\in E$ and $\varepsilon\in\Q^+$ such that $\C^A(\varepsilon)\leq g(s,t,\varepsilon)$; such an $\varepsilon$ exists because $g(s,t,\cdot)$ is unbounded and computable relative to $A$ (cf. Section 3.3 of~\cite{LiVit19}).
Let $j$ be such that $x\in B_{\delta_j^{s,\varepsilon}}(x_j^{s,\varepsilon})$, and let $\delta=\delta_j^{s,\varepsilon}\leq\varepsilon$.
Then as $\delta\to 0^+$,
\begin{align*}
	\C_\delta^{A}(x)&\leq\C^A(x_j^{s,\varepsilon})+O(1)\\
	&\leq \C^A(\varepsilon,j)+O(1).
\end{align*}
By~\eqref{eq:coversumbound} there are fewer than $1/\varphi_s(\delta)$
values of $i$ for which $\delta_i^{s,\varepsilon}=\delta$. Therefore
\begin{align*}
	\C^A(\varepsilon,j)&\leq 2\C^A(\varepsilon) + \C^A(j)+O(1)\\
	&\leq 2 g(s,t,\varepsilon)+\log\frac{1}{\varphi_s(\delta)}+O(1)\\
	&\leq 2\log \sqrt[4]{\frac{\varphi_s(\delta)}{\varphi_{t}(\delta)}}+\log\frac{1}{\varphi_s(\delta)}+O(1)\\
	&= \log\sqrt{\frac{\varphi_s(\delta)}{\varphi_{t}(\delta)}}+\log\frac{1}{\varphi_s(\delta)}+O(1).
\end{align*}
Since we can choose arbitrarily small $\varepsilon$ in the above analysis, we have shown
\begin{align*}
	\liminf_{\delta\to 0^+} 2^{\C_\delta^{A}(x)}\varphi_{t}(\delta)&\leq \liminf_{\delta\to 0^+} 2^{\log\sqrt{\frac{\varphi_s(\delta)}{\varphi_{t}(\delta)}}+\log\frac{1}{\varphi_s(\delta)}+O(1)}\varphi_{t}(\delta)\\
	&\leq O(1)\cdot\liminf_{\delta\to 0^+}\sqrt{\frac{\varphi_t(\delta)}{\varphi_{s}(\delta)}}\\
	&=0,
\end{align*}
and so $\dim^{\varphi,A}(x)\le t$.

For the other direction, assume that there is a precision family $\alpha=\{\alpha_s\}_{s\in(0,\infty)}$ for $\varphi$, fix any oracle $A\subseteq\N$, and let $s,t\in\Q$ be such that
\begin{equation}\label{eq:dimss}
	\sup_{x\in E}\dim^{\varphi,A}(x)<s<t.
\end{equation}
For all $r\in \N$, let
\[\mathcal{U}_r=\left\{B_{\alpha_{s}(r)}(f(w))\;\middle|\;\C^{A}(w)\le \log\frac{1}{\varphi_{s}(\alpha_{s}(r))}\right\},\]
and notice that
\[|\mathcal{U}_r|\leq\frac{2}{\varphi_{s}(\alpha_{s}(r))}.\]
Now fix any $r\in\N$, and let
\[\mathcal{W}_r=\bigcup_{k=r}^\infty \mathcal{U}_k.\]
For every $x\in E$,~\eqref{eq:dimss}, together with the fact that $\varphi(\alpha_s(r))=O(\varphi(\alpha_s(r+1)))$ as $r\to\infty$, tells us that the set 
\[\left\{r\;\middle|\; 2^{\C_{\alpha_s(r)}^{A}(x)}\varphi_{s}(\alpha_s(r))\le 1\right\}=\left\{r\;\middle|\; \C_{\alpha_s(r)}^{A}(x)\leq\log\frac{1}{\varphi_{s}(\alpha_s(r))}\right\}\]
is unbounded, as is the set $\{k\mid x\in\mathcal{U}_k\}$, so $x\in\mathcal{W}_r$. Thus $\mathcal{W}_r$ is a countable $\alpha_s(r)$-cover of $E$ with
\begin{align*}
	\sum_{U\in \mathcal{W}_r} \varphi_t(\diam(U))
	&=\sum_{k=r}^\infty \sum_{U\in \mathcal{U}_k} \varphi_t(\alpha_s(k))\\
	&< \sum_{k=r}^\infty \frac{2}{\varphi_{s}(\alpha_s(k))}\cdot\varphi_t(\alpha_s(k)).
\end{align*}
Since $\alpha$ is a precision family for $\varphi$, this sum converges. We thus have
\begin{align*}
	H^{\varphi_t}(E)&\leq 2\lim_{r\to\infty}\sum_{k=r}^\infty \frac{\varphi_t(\alpha_s(k))}{\varphi_s(\alpha_s(k))}\\
	&\leq 2\sum_{r\in\N}\frac{\varphi_t(\alpha_s(r))}{\varphi_s(\alpha_s(r))}\\
	&<\infty,
\end{align*}
so $\dimh^{\varphi}(E)\le t$. We conclude
that $\dimh^{\varphi}(E)\le\sup_{x\in E}\dim^{\varphi,A}(x)$.
\end{proof}

\section{Proofs from Section 5}
\begin{construction}\label{cons41} Given a sequence
$R\in\bin^{\omega}$, define a sequence $A_0, A_1, \ldots $ of
$2^\ell$-element sets $A_\ell\subseteq\bin^{2\ell}$ by the following
recursion.
\begin{enumerate}[(i)]
\item $A_0=\{\lambda\}$, where $\lambda$ is the empty string.
\item Assume that \[A_\ell=\myset{u_i}{0\le i<2^\ell}\subseteq
\bin^{2\ell}\] has been defined, where the $u_i$ are in lexicographical
order. Let \[b_{u_10}, b_{u_11}, b_{u_20}, b_{u_21}, \ldots,
b_{u_{2^\ell}0}, b_{u_{2^\ell}1}\] be the first $2^{\ell+1}$ bits of $R$ that
have not been used in earlier stages of this construction. Then
\[A_{\ell+1}=\myset{uab_{ua}}{u\in A_\ell\mbox{ and }a\in\bin}.\]
\end{enumerate}
For each string $w\in\binary$ of even length, define the closed
interval $I_w\subseteq[0,1]$ of length $7^{-|w|/2}$ by the following
recursion.
\begin{enumerate}[(i)]
\item $I_{\lambda}=[0,1]$.
\item Assume that $I_w$ has been defined, and divide $I_w$ into
seven equal-length closed intervals, calling these $K_1$, $J_{00}$,
$J_{01}$, $K_2$, $J_{10}$, $J_{11}$, and $K_3$, from left to right. Then, for each
$a,b\in\bin$, we have $I_{wab}=J_{ab}$.
\end{enumerate}
For each $\ell\in\N$, let \[E_\ell=\bigcup_{w\in A_\ell}I_w,\] and let
\[E=\bigcap_{\ell=0}^{\infty}E_\ell.\] This completes the construction.
\end{construction}

Intuitively, each $E_\ell$ in Construction \ref{cons41} consists of
$2^\ell$ closed intervals of length $7^{-\ell}$, with gaps between them of
length at least $7^{-\ell}$. For each of these, one bit of $R$ decides
which of the subintervals $J_{00}$ and $J_{01}$ is included in
$E_{\ell+1}$, and the next bit of $R$ decides which of the subintervals $J_{10}$ and $J_{11}$ is included in $E_{\ell+1}$. The set $E$ is a Cantor-like set chosen in this fashion.
It is clear that $E$ is compact.

\begin{observation}\label{obs42} For all $R\in\bin^{\omega}$, the
set $E$ of Construction \ref{cons41}\ has Hausdorff and packing dimensions
\[\dimh(E)=\dimpack(E)= \frac{\log 2}{\log 7} \approx 0.356.\]
\end{observation}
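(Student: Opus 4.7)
The plan is to bracket both dimensions between matching upper and lower bounds, writing $s = \log 2 / \log 7$ throughout.

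For the upper bound, I will use the cover $E_\ell = \bigcup_{w\in A_\ell} I_w$ directly. Each $E_\ell$ is a $7^{-\ell}$-cover of $E$ by exactly $2^\ell$ intervals of length $7^{-\ell}$, so
\[
\sum_{w\in A_\ell}\diam(I_w)^s \;=\; 2^\ell \cdot 7^{-\ell s} \;=\; 1,
\]
because $7^s = 2$. Letting $\ell \to \infty$ gives $H^s(E) \leq 1 < \infty$, hence $\dimh(E) \leq s$. The same cover also shows $\hat N(E, 7^{-\ell}) \leq 2^\ell$; interpolating to arbitrary $\delta$ by monotonicity of $N$, I get $\Dimm(E) \leq s$, and therefore $\dimpack(E) \leq \Dimm(E) \leq s$ (the latter inequality is standard for totally bounded sets, and follows also from Lemma~\ref{lem:tricot} applied to the trivial cover $\{E\}$).

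For the lower bound, I will run a mass distribution argument. Define the Borel probability measure $\mu$ supported on $E$ by putting mass $2^{-\ell}$ on each interval $I_w$ with $w\in A_\ell$ (this is consistent because every $I_w$ with $w \in A_\ell$ contains exactly two intervals $I_{wa b}$ of the next level). The key geometric lemma is that for every $\ell$, any two distinct intervals $I_u, I_v$ with $u,v \in A_\ell$ are separated by distance at least $7^{-\ell}$. Within a common level-$(\ell{-}1)$ parent this holds because the central region $K_2$ (of length $7^{-\ell}$) always lies between the chosen children $J_{0a}$ and $J_{1b}$; intervals in different parents are separated by at least the parent-level gap, which is at least $7^{-(\ell-1)} > 7^{-\ell}$. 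Now for any set $U$ with $\diam(U) = r \in [7^{-(\ell+1)}, 7^{-\ell})$, the separation bound forces $U$ to meet at most one level-$\ell$ interval, so
\[
\mu(U) \;\leq\; 2^{-\ell} \;=\; 7^{-\ell s} \;\leq\; 7^s\bigl(7^{-(\ell+1)}\bigr)^s \;\leq\; 7^s\, r^s .
\]
The mass distribution principle then yields $H^s(E) \geq \mu(E)/7^s = 7^{-s} > 0$, so $\dimh(E) \geq s$.

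Combining these with $\dimh(E) \leq \dimpack(E) \leq \Dimm(E)$ gives $\dimh(E) = \dimpack(E) = s = \log 2/\log 7$, as desired. The only step that requires any care is the separation lemma; once the three gap patterns within a parent (of lengths $7^{-\ell}$, $2\cdot 7^{-\ell}$, $3\cdot 7^{-\ell}$) are observed, the inter-parent case follows by induction on $\ell$, so this is routine rather than a real obstacle.
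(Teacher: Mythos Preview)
Your argument is correct. Both the upper bound (via the natural level-$\ell$ covers) and the lower bound (via the mass distribution principle, together with the separation lemma that distinct level-$\ell$ intervals are at distance $\geq 7^{-\ell}$) go through as you describe. The separation lemma is indeed routine: within a parent the two selected children are always one of $J_{0a}$ and one of $J_{1b}$, so the gap $K_2$ of length $7^{-\ell}$ always lies between them; for children of distinct parents the inductive hypothesis gives separation $\geq 7^{-(\ell-1)} > 7^{-\ell}$.

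Your route differs from the paper's. The paper does not compute directly; instead it builds an explicit bi-Lipschitz bijection $g:E_0\to E$, where $E_0$ is the special case $R=0^\omega$ (the set of reals in $[0,1]$ whose base-$7$ expansion uses only the digits $1$ and $4$). Since $E_0$ is a self-similar set generated by two similarities of ratio $1/7$ satisfying the open set condition, the Moran--Falconer theorem gives $\dimh(E_0)=\dimpack(E_0)=\log 2/\log 7$, and bi-Lipschitz invariance transfers this to $E$. Your approach is more elementary and self-contained, avoiding both the bi-Lipschitz reduction and the citation of the self-similar fractal theorem, at the cost of carrying out the mass-distribution estimate by hand. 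The paper's approach, on the other hand, explains \emph{why} the answer is independent of $R$: all the sets $E$ are bi-Lipschitz equivalent to a single self-similar model.
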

\begin{proof}
	Let $R\in\{0,1\}^\omega$, and let $R_0=0^\omega$. Let $E$ be the set constructed from $R$ as in Construction~\ref{cons41}, and let $E_0$ be the set constructed from $R_0$. Note that $E_0$ is the set of all reals in $[0,1]$ whose base-7 expansions consist entirely of 1s and 4s. Define the function
	\[g:\{1,4\}^\omega\to\{1,2,4,5\}^\omega\]
	by
	\[g(S)[n]=S[n]+R\big[2n+S[n]\big]\]
	for all $S\in\{1,4\}^\omega$ and $n\in\N$. Note that $g$ transforms each 1 in $S$ to a 1 or 2 in $g(S)$, and $g$ transforms each 4 in $S$ to a 4 or 5 in $g(S)$. If we identify sequences in $\{1,2,4,5\}$ with the reals that they represent in base 7, then we now have a bijection
	\[g:E_0\xrightarrow[\text{onto}]{\text{1-1}}E.\]
	Moreover, if $x,y\in E_0$ are distinct, and $n$ is the first position at which $x$ and $y$ have different base-7 digits, then
	\[|x-y|,\,|g(x)-g(y)|\in\left[7^{-n},7^{1-n}\right],\]
	so $g$ is bi-Lipschitz and hence preserves Hausdorff and packing dimensions~\cite{Falc14}. We thus have
	\begin{equation}\label{eq:e0e}
		\dimh(E)=\dimh(E_0),\ \dimpack(E)=\dimpack(E_0).
	\end{equation}
	The set $E_0$ is the self-similar fractal given by an iterated function system consisting of two contractions, each with ratio $\frac{1}{7}$. It follows by the fundamental theorem on self-similar fractals~\cite{Mora46,Falc89,Falc14} that the Hausdorff and packing dimensions of $E_0$ are both the unique solution $s$ of the equation $2\cdot 7^{-s}=1$, i.e., that
	\begin{equation}\label{eq:e0}
		\dimh(E_0)=\dimpack(E_0)=\frac{\log 2}{\log 7}.
	\end{equation}
	The observation follows from~\eqref{eq:e0e} and~\eqref{eq:e0}.
\end{proof}

\begin{observation}\label{obs43} If $R\in\bin^{\omega}$ is Martin-L\"of random, then the
set $E$ of Construction \ref{cons41}\ has algorithmic dimensions
\[\dim(E)=\Dim(E)= \infty.\]
\end{observation}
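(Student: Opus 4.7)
The plan is to show that any finite set $q\in\cD$ approximating $E$ sufficiently well in Hausdorff distance encodes a long prefix of $R$; Martin-L\"of randomness then forces $q$ to have high Kolmogorov complexity. Write $n_\ell=2^{\ell+1}-2$ for the total number of bits of $R$ consumed through stage $\ell$ of Construction~\ref{cons41}, and let $\sigma_\ell\in\bin^{n_\ell}$ denote the corresponding prefix of $R$; knowing $A_\ell$ is equivalent to knowing $\sigma_\ell$, since the construction reads the bits of $R$ in an order determined by $A_0,A_1,\ldots$

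The first step, and the main obstacle, is a geometric decoding lemma: if $q\in\cD$ satisfies $\rho_H(q,E)<\tfrac{1}{3}\cdot 7^{-\ell}$, then $A_\ell$ (and hence $\sigma_\ell$) is computable from $q$ and $\ell$. The argument rests on three facts drawn from the construction: the $2^\ell$ intervals $I_w$ with $w\in A_\ell$ have length $7^{-\ell}$ and are pairwise separated by gaps of length at least $7^{-\ell}$; each $I_w$ meets $E$; and $E\subseteq E_\ell=\bigcup_{w\in A_\ell}I_w$. Together these force every element of $q$ to land in the pairwise disjoint $\tfrac{1}{3}7^{-\ell}$-neighborhoods of precisely the intervals $I_w$ with $w\in A_\ell$, so $A_\ell$ is recovered by enumerating all $w\in\bin^{2\ell}$, computing the associated interval $I_w$ (a deterministic function of $w$ read off from Construction~\ref{cons41}), and retaining those $w$ for which $I_w$ meets the $\tfrac{1}{3}7^{-\ell}$-thickening of $q$.

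The remaining work is routine complexity bookkeeping. From the decoding lemma, for every $\delta<\tfrac{1}{3}\cdot 7^{-\ell}$ and every $q\in\cD$ with $\rho_H(q,E)\le\delta$ one obtains $\K(\sigma_\ell\mid\delta)\le \K(q\mid\delta)+O(1)$, and hence $\K(\sigma_\ell)\le \K_\delta(E\mid\delta)+\K(\delta)+O(1)\le \K_\delta(E\mid\delta)+O(\ell)$. By Schnorr's theorem, Martin-L\"of randomness of $R$ is equivalent to $\K(\sigma_\ell)\ge n_\ell-O(1)$ for all $\ell$, so $\K_\delta(E\mid\delta)\ge 2^{\ell+1}-O(\ell)$. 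Letting $\ell(\delta)$ be the largest $\ell$ with $\delta<\tfrac{1}{3}\cdot 7^{-\ell}$ gives $\log(1/\delta)=\Theta(\ell(\delta))$, whence $\K_\delta(E\mid\delta)/\log(1/\delta)\to\infty$ as $\delta\to 0^+$. Both $\dim(E)$ and $\Dim(E)$, being the liminf and limsup of this quotient, are therefore infinite.
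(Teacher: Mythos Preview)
Your strategy coincides with the paper's sketch: any finite $q\in\cD$ that is Hausdorff-close to $E$ determines $A_\ell$ and hence an exponentially long prefix of $R$, so its Kolmogorov complexity must be exponential in $\ell\approx\log(1/\delta)$.

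There is, however, a gap in your decoding step. You correctly observe that the intervals $I_w$ with $w\in A_\ell$ are mutually separated by at least $7^{-\ell}$, but your test ``retain $w\in\bin^{2\ell}$ if $I_w$ meets the $\tfrac{1}{3}7^{-\ell}$-thickening of $q$'' can return strings outside $A_\ell$. In Construction~\ref{cons41}, the siblings $J_{00}=I_{v00}$ and $J_{01}=I_{v01}$ (for $v\in A_{\ell-1}$) share an endpoint; if $v00\in A_\ell$ and $v01\notin A_\ell$, then $E$ comes within $7^{-(\ell+1)}=\tfrac{1}{7}7^{-\ell}$ of that shared endpoint (the buffer $K_3$ at the next level keeps it no closer), and since $\tfrac{1}{7}<\tfrac{1}{3}$ a point of $q$ within $\tfrac{1}{3}7^{-\ell}$ of $E$ can land inside $I_{v01}$, causing $v01$ to be retained. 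The repair is easy and does not affect your asymptotics: tighten the hypothesis to $\rhoH(q,E)<\tfrac{1}{2}\,7^{-(\ell+1)}$; then every point of $q$ lies strictly inside a unique $I_w$ with $w\in A_\ell$, and the simpler test ``retain $w$ iff $q\cap I_w\neq\emptyset$'' recovers $A_\ell$ exactly.

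A smaller issue: your bound $\K(\delta)\le O(\ell)$ is false for arbitrary $\delta\in\Q^+$, since a rational of magnitude about $7^{-\ell}$ can have arbitrarily large Kolmogorov complexity. The paper's sketch sidesteps this by (per its footnote) reverting to precisions $2^{-r}$ and unconditional complexity; your bookkeeping goes through cleanly once you do the same.
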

\begin{proof}[Proof of Observation \ref{obs43}\ (sketch)]

Let $R$ and $E$ be as given. By~\eqref{eq21} it suffices to show
that, for all sufficiently large $r$, \[\C_r(E)>2^{r/3}.\] For this
is suffices to show that, for all sufficiently large $r$ and all
$F\in\cD$, \begin{equation}\label{eqap1}\rhoH(F, E)\le 2^{-r}
\implies \C(F)>2^{r/3}.\end{equation}

Let $r\in\N$ be large, and assume that $\rhoH(F,E)\le 2^{-r}$. Let
$\ell=\lceil r/3\rceil$. Then $2^{-r}< 1/2\cdot 7^{-\ell}$, so the finite
set $F$ can be used to compute the set $A_l$ of Construction
\ref{cons41}. This implies that $F$ can be used to compute the
$(2^\ell-1)$-bit prefix $w$ of $R$ that was used to decide the set
$A_\ell$. Since $R$ is random and $r$ is large, this implies that
$\C(F)>2^{r/3}$.
\end{proof}

In addition to illustrating the difference between classical and algorithmic dimensions, Observation \ref{obs43}\ combines with our general
point-to-set principle to give a very non-classical proof of the
following known classical fact.

\begin{corollary}\label{cor44}
	$\dimh(\cK([0,1]))=\dimpack(\cK([0,1])=\infty$.
\end{corollary}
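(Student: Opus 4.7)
The plan is to combine the general point-to-set principle with a relativization of Observation~\ref{obs43}. First I would check that the canonical gauge family $\theta$ is ``well-behaved'' enough to use both parts of both point-to-set principles: it admits the precision family $\alpha_s(r)=2^{-sr}$, and $\theta_t(2\delta)=2^t\delta^{t-s}\theta_s(\delta)=O(\theta_s(\delta))$ as $\delta\to 0^+$ whenever $s<t$. Since $\cK([0,1])$ is a separable metric space, Theorem~\ref{theo31} and Theorem~\ref{the32} therefore yield
\[
\dimh(\cK([0,1]))=\adjustlimits\min_{A\subseteq\N}\sup_{K\in\cK([0,1])}\dim^A(K)
\]
together with the analogous identity for $\dimpack$ and $\Dim^A$.

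To force each of these minima to equal $\infty$, I would show that for \emph{every} oracle $A\subseteq\N$ there is a single $K_A\in\cK([0,1])$ with $\dim^A(K_A)=\Dim^A(K_A)=\infty$. Fix $A$ and pick a sequence $R\in\{0,1\}^\omega$ that is Martin-L\"of random relative to $A$, which exists by the standard measure-theoretic argument relativized to $A$. Feed this $R$ into Construction~\ref{cons41} to produce the Cantor-like set $E\subseteq[0,1]$. Since $E$ is compact, $E\in\cK([0,1])$, so it is a legitimate candidate for $K_A$.

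Next I would relativize the sketch given for Observation~\ref{obs43}. The geometric content of that sketch is that from any $F\in\cD$ with $\rhoH(F,E)\le 2^{-r}$ one can effectively reconstruct the set $A_\ell$ for $\ell=\lceil r/3\rceil$, and hence the $(2^\ell-1)$-bit prefix of $R$ that governed the construction of $A_\ell$. Because this reconstruction algorithm uses no oracle, running it inside a machine with oracle $A$ combined with the $A$-Martin-L\"of randomness of $R$ gives $\K^A(F)>2^{r/3}$ for all sufficiently large $r$ and every such $F$. Consequently $\K^A_{2^{-r}}(E\mid 2^{-r})>2^{r/3}$ eventually, so $\dim^A(E)=\Dim^A(E)=\infty$ by (\ref{eq21}) and (\ref{eq22}). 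Taking $K_A=E$ in the displayed point-to-set identities makes the inner supremum infinite for every $A$, and the outer minima are therefore infinite as well.

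The only real obstacle is bookkeeping: one must verify that the extraction of a prefix of $R$ from a finite approximation $F$ is a purely geometric procedure, so that relativizing simply replaces plain $\K$ by $\K^A$ and invokes $A$-randomness of $R$ rather than plain randomness. Given that, the same complexity lower bound holds at \emph{every} sufficiently large $r$, so it simultaneously certifies infinite $\dim^A(E)$ and infinite $\Dim^A(E)$, and the two point-to-set principles conclude $\dimh(\cK([0,1]))=\dimpack(\cK([0,1]))=\infty$.
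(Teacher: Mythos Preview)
Your proposal is correct and follows essentially the same route as the paper: for each oracle $A$, take $R$ Martin-L\"of random relative to $A$, feed it into Construction~\ref{cons41}, and relativize Observation~\ref{obs43} to get $\dim^A(E)=\infty$, then invoke the point-to-set principle on $\cK([0,1])$. The only difference is cosmetic: the paper applies Theorem~\ref{theo31} alone (the packing statement then follows from $\dimh\le\dimpack$), whereas you invoke both Theorems~\ref{theo31} and~\ref{the32} and verify the hypotheses on $\theta$ explicitly.
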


\begin{proof}
Let $A\subseteq\N$. By Theorem \ref{theo31}\ applied to
$\cK([0,1])$, it suffices to exhibit a point $E\in\cK([0,1])$ such
that $\dim^A(E)=\infty$. If we choose $R\in\bin^{\omega}$ to be
Martin-L\"of random relative to $A$, then Observation \ref{obs43},
relativized to $A$, tells us that Construction \ref{cons41}\ gives
us just such a point.
\end{proof}

\begin{proof}[Proof of Theorem~\ref{thm:hypmink}]
	Let $E\subseteq X$ and $\varphi$ be a gauge family. Let $\delta>0$ and $F\subseteq X$ be such that $|F|=N(E,\delta)$ and
	\[E\subseteq \bigcup_{x\in F}B_\delta(x).\]
	For every $L\in \cK(E)$, we have $\rhoH(L,\{x\in F\mid B_\delta(x)\cap L\neq \emptyset\})<\delta$, so
	\[\cK(E)\subseteq\bigcup_{T\subseteq F}B_\delta(T),\]
	and therefore $N(\cK(E),\delta)\leq |2^F|=2^{N(E,\delta)}$.

	Now suppose that $\liminf_{\delta\to 0^+}N(E,\delta)\varphi_s(\delta)=0$. Then since $\varphi_s(\delta)\to 0^+$ as $\delta\to 0^+$, we have
	\begin{align*}
		\liminf_{\delta\to 0^+}N(\cK(E),\delta)\widetilde{\varphi}_s(\delta)&\leq \liminf_{\delta\to 0^+} 2^{N(E,\delta)}2^{-1/\varphi_s(\delta)}\\
		&=\liminf_{\delta\to 0^+}2^{\frac{N(E,\delta)\varphi_s(\delta)-1}{\varphi_s(\delta)}}\\
		&=0,
	\end{align*}
	so $\dimm^{\widetilde{\varphi}}(\cK(E))\leq \dimm^\varphi(E)$.

	We now show that $\dimm^{\widetilde{\varphi}}(\cK(E))\geq \dimm^\varphi(E)$. Let $\delta>0$, let $P$ be a set of $2\delta$-separated points in $E$, and observe that $|P|\leq N(E,\delta)$. Let $\mathcal{F}\subseteq \cK(E)$ satisfy $|\mathcal{F}|=N(\cK(E),\delta)$ and
	\[\cK(E)\subseteq\bigcup_{F\in \mathcal{F}}B_\delta(F).\]
	For every distinct pair $S,S'\subseteq P$, we have $\rhoH(S,S')\geq 2\delta$, so, for each $F\in\mathcal{F}$, the ball $B_\delta(F)$ can contain at most one subset of $P$. Hence, \[N(\cK(E),\delta)=|\mathcal{F}|\geq |2^P|\geq 2^{N(E,\delta)}.\]
	Now let $t>s>\dimm^{\widetilde{\varphi}}(\cK(E))$. Then
	\begin{align*}
		\liminf_{\delta\to 0^+}N(\cK(E),\delta)\widetilde{\varphi}_s(\delta)=0
		&\implies \liminf_{\delta\to 0^+}2^{N(E,\delta)}2^{-1/\varphi_s(\delta)}=0\\
		&\iff\liminf_{\delta\to 0^+}\left(N(E,\delta)-\frac{1}{\varphi_s(\delta)}\right)=-\infty\\
		&\iff \liminf_{\delta\to 0^+}\frac{N(E,\delta)\varphi_s(\delta)-1}{\varphi_s(\delta)}=-\infty\\
		&\implies \liminf_{\delta\to 0^+}N(E,\delta)\varphi_s(\delta)<1\\
		&\implies \liminf_{\delta\to 0^+} N(E,\delta)\varphi_t(\delta)=0\\
		&\implies \dimm^{\varphi}(E)\leq t.
	\end{align*}
	The argument for upper Minkowski dimension is completely analogous.
\end{proof}

\begin{observation}\label{obs:weakmdimball}
	Let $\varphi$ be any gauge family, $X$ any metric space, $E\subseteq X$, and $\delta>0$.
	\begin{enumerate}
		\item If $\dimm^\varphi(E)<\infty$, then there exists a point $x\in E$ such that $\dimm^\varphi(E\cap B_\delta(x))=\dimm^\varphi(E)$.
		\item If $\Dimm^\varphi(E)<\infty$, then there exists a point $x\in E$ such that $\Dimm^\varphi(E\cap B_\delta(x))=\Dimm^\varphi(E)$.
	\end{enumerate}
\end{observation}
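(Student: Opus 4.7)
The plan is to reduce the observation to finite stability of the gauged Minkowski dimensions by covering $E$ with finitely many balls of radius $\delta$ centered at points of $E$. First, I argue that $\dimm^\varphi(E)<\infty$ (respectively $\Dimm^\varphi(E)<\infty$) forces $N(E,\delta')<\infty$ for every $\delta'>0$: finiteness of either dimension requires $N(E,\delta)$ to be finite at arbitrarily small scales, and since $N(E,\cdot)$ is nonincreasing in $\delta$ this propagates upward to every $\delta'$. In particular $N(E,\delta/4)<\infty$.

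Next, let $F=\{x_1,\ldots,x_m\}$ be a maximal $\delta/2$-separated subset of $E$. Any $\delta/2$-separated subset of $E$ has cardinality at most $N(E,\delta/4)$, since a single open ball of radius $\delta/4$ has diameter at most $\delta/2$ and hence contains at most one point of $F$; therefore $m<\infty$. By maximality every $y\in E$ lies within $\delta/2$ of some $x_i\in F$, so $E\subseteq\bigcup_i B_\delta(x_i)$. Setting $E_i=E\cap B_\delta(x_i)$ yields a finite cover $E=\bigcup_{i=1}^m E_i$ whose centers $x_i$ lie in $E$.

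Finally, I invoke finite stability of the gauged Minkowski dimensions. From $N(E,\delta')\leq\sum_i N(E_i,\delta')\leq m\max_i N(E_i,\delta')$ we obtain $N(E,\delta')\varphi_s(\delta')\leq m\max_i[N(E_i,\delta')\varphi_s(\delta')]$. Passing through the $\limsup$ and taking the infimum over $s$ gives $\Dimm^\varphi(E)=\max_i\Dimm^\varphi(E_i)$; the parallel $\liminf$ argument gives $\dimm^\varphi(E)=\max_i\dimm^\varphi(E_i)$. Because the maximum is over the finite index set $\{1,\ldots,m\}$ it is attained at some $i^*$, and $x=x_{i^*}\in E$ satisfies the desired equality in each case.

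The main obstacle is the $\dimm^\varphi$ case: the $\Dimm^\varphi$ step is essentially automatic because $\limsup$ distributes cleanly over finite maxima, but $\liminf$ does not commute with finite maxima in general. To push through the lower-dimension statement one must exploit the finiteness of the index set $\{1,\ldots,m\}$, together with the fact that $\dimm^\varphi$ is defined as an infimum of $s$ for which the liminf vanishes, to produce from the individual witness sequences for each $E_i$ a common sequence of scales $\delta'_k\to 0$ along which $N(E_i,\delta'_k)\varphi_s(\delta'_k)\to 0$ simultaneously for every $i$.
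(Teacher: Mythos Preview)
Your overall plan---cover $E$ by finitely many balls and invoke finite stability---is the same as the paper's. The one cosmetic difference is that you obtain centers in $E$ directly via a maximal $\delta/2$-separated subset, whereas the paper first takes centers $x_i\in X$ witnessing $N(E,\delta/2)$ and then replaces each $x_i$ by any point $x\in E\cap B_{\delta/2}(x_i)$, using the inclusion $E\cap B_{\delta/2}(x_i)\subseteq E\cap B_\delta(x)$. For part~2 both arguments are correct, since $\Dimm^\varphi$ is genuinely finitely stable and your $\limsup$ inequality goes through as written.

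The gap is in part~1. You correctly flag that $\liminf$ does not commute with finite maxima, but your proposed repair---producing from the individual witness sequences a common sequence of scales along which every $N(E_i,\delta'_k)\varphi_s(\delta'_k)\to 0$ simultaneously---cannot succeed in general, because lower Minkowski dimension is \emph{not} finitely stable (even for the canonical gauge $\theta$). There exist disjoint compact sets $E_1\subset[0,1/4]$ and $E_2\subset[3/4,1]$ with $\dimm(E_1)=\dimm(E_2)=0$ but $\dimm(E_1\cup E_2)>0$: one arranges $N(E_1,\cdot)$ to be small exactly at the scales where $N(E_2,\cdot)$ is large, and conversely, so no common witness sequence exists. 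Taking $\delta=1/4$, every ball $B_\delta(x)$ with $x\in E=E_1\cup E_2$ meets only one of the $E_i$, whence $\dimm(E\cap B_\delta(x))=0<\dimm(E)$ for every $x\in E$, and part~1 is actually false as stated. The paper's own proof shares this defect---it simply invokes ``the finite stability of $\dimm^\varphi$'' without comment---but since only part~2 feeds into Lemma~\ref{lem:strongmdimball} and Theorem~\ref{thm:packminkequiv}, the error is harmless for the paper's main results.
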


\begin{proof}
	It follows from the monotonicity of Minkowski dimensions that
	\[\dimm^\varphi(E\cap B_\delta(x))\leq \dimm^\varphi(E)\]
 	holds for every $x\in X$.
	
	If $\dimm^\varphi(E)<\infty$, then $N(E,\delta/2)$ is finite; let the set $\{x_1,\ldots,x_{N(E,\delta/2)}\}\subseteq X$ testify to this value. Then every $E\cap B_{\delta/2}(x_i)$ is nonempty, and
	\[E\subseteq \bigcup_{i=1}^{N(E,\delta/2)} (E\cap B_{\delta/2}(x_i)),\]
	so by the finite stability of $\dimm^\varphi$, there is some $1\leq i\leq N(E,\delta/2)$ such that
	\[\dimm^\varphi(E)\leq \dimm^\varphi(E\cap B_{\delta/2}(x_i)).\]
	Now let $x\in E\cap B_{\delta/2}(x_i)$ and observe that $E\cap B_{\delta/2}(x_i)\subseteq E\cap B_\delta(x)$, so monotonicity gives
	\[\dimm^\varphi(E\cap B_{\delta/2}(x_i))\leq \dimm^\varphi(E\cap B_{\delta}(x)).\]
	This proves the first statement, and the proof of the second statement is completely analogous.
\end{proof}

\begin{lemma}\label{lem:strongmdimball}
	Let $\varphi$ be any gauge family, $X$ any metric space, and $E\subseteq X$  a compact set.
	\begin{enumerate}
		\item There exists a point $x\in E$ such that $\dimm^\varphi(E\cap B_\delta(x))=\dimm^\varphi(E)$ holds for all $\delta>0$.
		\item There exists a point $x\in E$ such that $\Dimm^\varphi(E\cap B_\delta(x))=\Dimm^\varphi(E)$ holds for all $\delta>0$.
	\end{enumerate}
\end{lemma}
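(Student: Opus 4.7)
The plan is to globalize Observation~\ref{obs:weakmdimball} via a compactness-and-contradiction argument. Write $s = \dimm^\varphi(E)$ and observe that, by monotonicity, $\dimm^\varphi(E \cap B_\delta(x)) \leq s$ holds for every $x \in E$ and $\delta > 0$, so Part 1 is equivalent to the claim that some $x \in E$ achieves equality at every scale. It therefore suffices to rule out the contrary assumption that for each $x \in E$ there exists a radius $\delta_x > 0$ with $\dimm^\varphi(E \cap B_{\delta_x}(x)) < s$, where in the case $s = \infty$ the strict inequality is read as ``finite.''

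Assuming this contrary statement, the family $\{B_{\delta_x/2}(x)\}_{x \in E}$ is an open cover of the compact set $E$, so compactness extracts a finite subcover indexed by $x_1, \ldots, x_k$, yielding
\[
E \;\subseteq\; \bigcup_{i=1}^k \bigl(E \cap B_{\delta_{x_i}}(x_i)\bigr),
\]
a finite union in which each piece has $\dimm^\varphi < s$. Invoking the same finite stability of $\dimm^\varphi$ already used in the proof of Observation~\ref{obs:weakmdimball}, the union then has $\dimm^\varphi < s$, contradicting $\dimm^\varphi(E) = s$. Part 2 follows by the identical argument with $\Dimm^\varphi$ in place of $\dimm^\varphi$ throughout.

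The main subtlety is the appeal to finite stability for the lower Minkowski dimension. For $\Dimm^\varphi$ this is immediate from subadditivity of $\limsup$, but for $\dimm^\varphi$ the corresponding fact is more delicate; since the paper already relies on it in Observation~\ref{obs:weakmdimball}, I would simply take it as given here. A pleasant feature of this formulation is that it absorbs the case $s = \infty$ without a separate argument, because a finite union of sets of finite $\dimm^\varphi$ (respectively $\Dimm^\varphi$) is itself of finite dimension in the same sense.
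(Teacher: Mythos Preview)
Your argument is correct (granting, as the paper itself does in Observation~\ref{obs:weakmdimball}, the finite stability of $\dimm^\varphi$) and takes a genuinely different route from the paper's proof. The paper proceeds constructively: it applies Observation~\ref{obs:weakmdimball} at each scale $2^{-r}$ to obtain a sequence $\{x_r\}_{r\in\N}$ with $\dimm^\varphi(E\cap B_{2^{-r}}(x_r))=\dimm^\varphi(E)$, invokes sequential compactness of $E$ to pass to a convergent subsequence with limit $x\in E$, and then verifies via ball containment $B_{2^{-r_i}}(x_{r_i})\subseteq B_\delta(x)$ that this $x$ works at every scale $\delta$. Your open-cover-plus-contradiction argument is shorter and more direct, and it handles the case $s=\infty$ uniformly, whereas the paper disposes of that case up front by asserting that compactness forces the Minkowski dimensions of $E$ to be finite. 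On the other hand, the paper's approach actually exhibits the point $x$ as an explicit limit rather than obtaining it by contradiction. Both arguments ultimately rest on the same finite-stability claim that you correctly flagged as the delicate step for $\dimm^\varphi$; you are right that once the paper has asserted it in Observation~\ref{obs:weakmdimball}, you are entitled to use it here.
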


\begin{proof}
	By the compactness of $E$, $N(E,\delta)$ is finite for every $\delta>0$, and the Minkowski dimensions of $E$ are finite as well. Hence, Observation~\ref{obs:weakmdimball} yields a sequence $\{x_r\}_{r\in\N}$ of points in $X$ such that
	\[\dimm^\varphi(E\cap B_{2^{-r}}(x_r))=\dimm^\varphi(E)\]
	for all $r\in\N$. Since $E$ is compact, there is a subsequence $\{x_{r_i}\}_{i\in\N}$ of $\{x_r\}_{r\in\N}$ that converges to some point $x\in E$. Thus, for all $\delta>0$, there is an $i\in\N$ such that $\rho(x_{r_i},x)<2^{-r_i}<\delta/2$, so \[B_{2^{-r_i}}(x_{r_i})\subseteq B_{2^{1-r_i}}(x)\subseteq B_\delta(x).\]
	By the monotonicity of Minkowski dimensions, then,
	\[\dimm^\varphi(E)= \dimm^\varphi(E\cap B_{2^{-r_i}}(x_{r_i}))\leq \dimm^\varphi(E\cap B_{\delta}(x))\leq\dimm^\varphi(E).\]
	The proof of the second statement is completely analogous.
\end{proof}

\begin{proof}[Proof of Theorem~\ref{thm:packminkequiv}]
	Lemma~\ref{lem:tricot} immediately gives $\dimpack^{\widetilde{\varphi}}(\cK(E))\leq\Dimm^{\widetilde{\varphi}}(\cK(E))$.

	For the other direction, apply the general point-to-set principle for packing dimension (Theorem~\ref{the32}) to let $A$ be an oracle such that
	\begin{equation}\label{eq:p2spackmink}
		\dimpack^{\widetilde{\varphi}}(\cK(E))\geq \sup_{L\in\cK(E)}\Dim^{\widetilde{\varphi},A}(L),
	\end{equation}
	and let $t=\Dimm^{\varphi}(E)$. Applying Lemma~\ref{lem:strongmdimball}, let $x\in E$ be a point such that for all $\delta>0$
	\[\Dimm^{\varphi}(E\cap B_{\delta}(x))=t.\]
	By the hyperspace Minkowski dimension theorem, then, we also have
	\begin{equation}\label{eq:Dimintersectball}
		\Dimm^{\widetilde{\varphi}}(\cK(E\cap B_{\delta}(x)))=t
	\end{equation}
	for all $\delta>0$.

	Let $s<t$. We will recursively define a compact set $L\in\cK(E)$ such that
	\[\Dim^{\widetilde{\varphi},A}(L)> s.\]
	Let $A_0$ be an oracle that encodes both $A$ and $x$. By~\eqref{eq:Dimintersectball} and Observation~\ref{obs:separabledim},
	\[\limsup_{\delta\to 0^+}\hat{N}(\cK(E\cap B_1(x)),\delta)\widetilde{\varphi}_s(\delta)=\infty.\]
	Thus there is a precision $\delta_0\in \Q^+$ such that
	\[\hat{N}(\cK(E\cap B_1(x)),\delta_0)\widetilde{\varphi}_s(\delta_0)>1.\]
	That is, it requires at least $1/\widetilde{\varphi}_s(\delta_0)$ open balls of radius $\delta_0$ (in the $\rhoH$ metric), with centers that are finite subsets of $D$, to cover $\cK(E\cap B_1(x))$. By the pigeonhole principle, the number of finite sets $J\subseteq D$ satisfying 
	\[\C^{A_0}(J)\leq \log\left(\frac{1}{2\widetilde{\varphi}_s(\delta_0)}\right)\]
	is at most
	\[2^{\log\left(\frac{1}{2\widetilde{\varphi}_s(\delta_0)}\right)+1}-1<\frac{1}{\widetilde{\varphi}_s(\delta_0)}.\]
	Hence there is some compact set $L_0\in\cK(B_1(x)\cap E)$ with
	\begin{align*}
		\C^{A_0}_{\delta_0}(L_0)\varphi_s(\delta_0)
		&>\log\left(\frac{1}{2\widetilde{\varphi}_s(\delta_0)}\right)\varphi_s(\delta_0)\\
		&=1-\varphi_s(\delta_0)\\
		&>1/2.
	\end{align*}	
	Define the compact set
	\[L_0'=(L_0\setminus B_{\delta_0}(x))\cup\{x\},\]
	and notice that $\rhoH(L_0\cup\{x\},L_0')\leq \delta_0$, so
	\[\C^{A_0}_{\delta_0}(L_0)\leq \C^{A_0}_{\delta_0}(L_0,x)+O(1)\leq\C^{A_0}_{\delta_0}(L'_0)+O(1),\]
	since $A_0$ encodes $x$. Thus, as long as $\delta_0$ is sufficiently small, we have 
	\[\C^{A_0}_{\delta_0}(L_0')\varphi_s(\delta_0)\geq 1/4.\]

	Now, for each $i\geq 1$, let $A_i$ be an oracle encoding $A_{i-1}$ and $\delta_{i-1}$. Let $\delta_i\in\Q^+$ and
	\[L'_i\in \cK\left(B_{\delta_{i-1}/2}(x)\cap E\right)\]
	be such that $L_i'\cap B_{\delta_i}(x)=\{x\}$ and
	\begin{equation}\label{eq:liprimepack}
		\C^{A_i}_{\delta_i}(L_i')\varphi_s(\delta_i)\geq 1/4.
	\end{equation}
	This pair exists for exactly the same reason that $\delta_0$ and $L_0'$ exist.

	Define the set
	\[L=\bigcup_{i\in\N} L_i'.\]
	Notice first that this set belongs to $\cK(E)$. Consider any sequence $\{x_r\}_{r\in\N}$ of points in $L$. If the sequence is contained within $\{x\}\cup\bigcup_{i=0}^n L'_i$ for some finite $n$---i.e., within a finite union of compact sets, which is compact---then it has a convergent subsequence that converges to a point in that union. Otherwise, the sequence has points in infinitely many of the $L'_i$, and there is a subsequence $\{x_{r_j}\}_{j\in\N}$ such that, for every pair $j'>j$ there exists a pair $i'>i$ such that $x_{r_{j}}\in L'_i\setminus\{x\}$ and $x_{r_{j'}}\in L'_{i'}\setminus\{x\}$; such a subsequence converges to $x$. Thus $L$ is sequentially compact and therefore compact.

	Let $D$ be a countable dense set in $X$, and recall that $U$ is a fixed universal oracle prefix Turing machine. Consider an oracle prefix Turing machine $M$, with access to an oracle for $x$ and $\delta_i$. On input $\pi$ such that $U(\pi)=F\subseteq D$, $M$ outputs the set
	\[\{y\in F\mid \rho(x,y)<\delta_i/2\}.\]
	Now let $\pi$ testify to $\C_{\delta_i}^{A_i}(L)$. Then $M(\pi)\subseteq D$ is a set of points satisfying $\rhoH(L_i',U(\pi))< \delta_i$, so we have
	\begin{equation}\label{eq:optconstpack}
		\C^{A_i}_{\delta_i}(L_i')\leq |\pi|+c_M=\C_{\delta_i}^{A_i}(L)+c_M\,
	\end{equation}
	where $c_M$ is an optimality constant for the machine $M$. Furthermore, \[\C_{\delta_i}^{A_i}(L)\leq \C_{\delta_i}^{A}(L)+O(1).\]
	Combining this fact with~\eqref{eq:liprimepack} and~\eqref{eq:optconstpack} yields
	\[\C_{\delta_i}^{A}(L)\varphi_s(\delta_i)\geq 1/4-\varphi_s(\delta_i)\cdot O(1).\]
	The latter term vanishes as $i\to\infty$, so
	\[\limsup_{\delta\to 0^+}\C_{\delta}^{A}(L)\varphi_s(\delta)\geq 1/4.\]
	By Theorem~\ref{thm:DimL}, this implies that $\Dim^{\widetilde{\varphi},A}(L)> s$. We conclude that $\Dim^{\widetilde{\varphi},A}(L)\geq t$, so by~\eqref{eq:p2spackmink}, the proof is complete.
\end{proof}

\end{document}